\setlist[enumerate,1]{label={(\roman*)}}
\Crefname{property}{Property}{Properties}
\theoremstyle{plain}
\newtheorem{theorem}{Theorem}[section]
\newtheorem{lemma}[theorem]{Lemma}
\newtheorem{proposition}[theorem]{Proposition}
\newtheorem{corollary}[theorem]{Corollary}
\newtheorem{remark}[theorem]{Remark}
\newtheorem{example}[theorem]{Example}
\theoremstyle{definition}
\newtheorem{definition}[theorem]{Definition}
\newcommand{\setbuild}[2]{\left\{#1\middle|#2\right\}}
\newcommand{\complexes}{\mathbb{C}}
\newcommand{\nonnegativereals}{\mathbb{R}_{\ge 0}}
\newcommand{\positiveintegers}{\mathbb{N}_{>0}}
\DeclareMathOperator{\tensorrank}{R}
\DeclareMathOperator{\borderrank}{\underline{R}}
\DeclareMathOperator{\asymptoticrank}{\undertilde{R}}
\newcommand{\distributions}[1][]{\mathcal{P}_{#1}}
\DeclareMathOperator{\support}{supp}
\DeclareMathOperator{\probability}{Pr}
\newcommand{\typeclass}[2]{T^{#1}_{#2}}
\newcommand{\entropy}{H}
\newcommand{\unittensor}[1]{\langle{#1}\rangle}
\newcommand{\mmtensor}[3]{\langle{#1,#2,#3}\rangle}
\newcommand{\partialmmtensor}[1]{M_{#1}}
\newcommand{\partialmmsupport}[1]{S_{#1}}
\newcommand{\supportrank}{\tensorrank_{s}}
\newcommand{\bordersupportrank}{\borderrank_{s}}
\newcommand{\asymptoticsupportrank}{\asymptoticrank_{s}}
\title{Partial and weighted matrix multiplication}
\author[1]{P\'eter Vrana}
\affil[1]{Department of Algebra and Geometry, Institute of Mathematics, Budapest University of Technology and Economics, M\H uegyetem rkp. 3., H-1111 Budapest, Hungary.}
\date{}
\begin{document}
\maketitle

\begin{abstract}
In a paper published in 1981, Sch\"onhage showed that large total matrix multiplications can be reduced to powers of partial matrix multiplication tensors, which correspond to the bilinear computation task of multiplying matrices with some of the entries fixed to be zero. It was left as an open problem to generalize the method to the case when the multiplication is also partial in the sense that only a subset of the entries need to be computed. We prove a variant of a more general case: reducing large weighted matrix multiplications to tensor powers of a partial matrix multiplication in the sense that every entry of the result is a partial version of the inner product of the corresponding row and column of the factors that would appear in the usual matrix product. The implication is that support rank upper bounds on partial matrix multiplication tensors in this general sense give upper bounds on the support rank exponent of matrix multiplication.
\end{abstract}

\section{Introduction}

Understanding how the number of arithmetic operations required to multiply two $n\times n$ matrices scales with $n$ is a major open problem in theoretical computer science. The exponent of matrix multiplication $\omega$ is defined as the smallest real number such that, for all $\epsilon>0$, matrix multiplication can be performed using $O(n^{\omega+\epsilon})$ operations. It is known that at least $n^2$ operations are required, i.e., $2\le\omega$, while the basic algorithm that computes each entry of the product separately as inner products of rows and columns shows that $\omega\le 3$.

While the best lower bound is still $2\le\omega$, which is conjectured to be tight, the upper bound has been gradually improved through the development of increasingly more powerful tools. On a high level, upper bounds are proved by finding a starting tensor (bilinear map) $T$, an efficient way to compute $T$ (using few multiplications), and efficiently reducing large matrix multiplications to high tensor powers of $T$. The improved upper bounds have been made possible through a combination of enlarging the class of suitable tensors to which matrix multiplication can be reduced, finding specific good starting tensors and, more recently, refining the reduction step to improve its efficiency. We refer to \cite{blaser2013fast,pan2018fast} and \cite[Chapter 15]{burgisser1997algebraic} for more details on fast matrix multiplication algorithms.

The first upper bound below $3$ was Strassen's breakthrough result \cite{strassen1969gaussian}, in which he showed $\omega\le\log_2 7$ using simply the $2\times 2\times 2$ matrix multiplication tensor as a starting tensor. In \cite{pan1978strassen} Pan  introduced a technique called trilinear aggregation, for which the starting tensor is a possibly rectangular matrix multiplication, leading to the upper bound $\omega\le 2.79$. Bini et al. \cite{bini1979complexity} introduced the new concept of border rank (which, over the complex numbers, can be viewed as an arbitrary-precision approximation of a bilinear map), and showed that these can be used equally well for the purpose of bounding the asymptotic complexity, thereby relaxing what it means to efficiently compute the starting tensor $T$. Subsequently, Sch\"onehage showed how one can use partial matrix multiplication as a starting tensor \cite{schonhage1981partial}. This includes the case when the starting tensor is a sum of disjoint matrix multiplications, which comes with an even better upper bound known as the asymptotic sum inequality. In \cite{strassen1986asymptotic} Strassen introduced the laser method, which used a starting tensor that has a special block decomposition, with the nonzero blocks being matrix multiplication tensors, and the arrangement of the nonzero blocks (known as the outer structure) being also the same as in a matrix multiplication tensor (in the usual basis of matrix units). Subsequently, Coppersmith and Winograd developed a much more general version of the laser method, capable of reducing large matrix multiplications to powers of a starting tensor with an outer structure that is not a matrix multiplication, but what is known as a tight set, obtaining the upper bound $\omega\le 2.3755$ \cite{coppersmith1990matrix}. Since that paper, subsequent improvements did not result from enlarging the class of starting tensors and neither have better starting tensors been found. Instead, recent upper bounds are based on applying the laser method in more refined ways to tensor powers of the Coppersmith--Winograd tensor \cite{stothers2010complexity,williams2012multiplying,le2014algebraic,alman2021refined,duan2022faster,williams2024new}, gradually lowering to the current best value $\omega\le 2.371552$.

However, it has become clear that the current methods and some of their variants will not be able to prove a bound that is substantially closer to the conjectured value of $2$. The first such barrier result, proved by Ambainis, Filmus and Le Gall \cite{ambainis2015fast}, shows that it is not possible to prove $\omega\le 2.30\ldots$ using the Coppersmith--Winograd starting tensors with a range of methods. This has been extended to wider classes of methods, with lower bounds that are weaker but still strictly above $2$ \cite{christandl2021barriers,alman2018further,alman2018limits,alman2021limits} (with the exception of one of the ``small'' Coppersmith--Winograd tensors, which could still prove $\omega=2$ if its asymptotic rank was equal to $3$).

A different toolset is provided by the recent group-theoretic approach of Cohn and Umans \cite{cohn2003group,cohn2005group}, and its extension to adjacency algebras of coherent configurations \cite{cohn2013fast}. The group-theoretic approach proceeds by embedding a large matrix multiplication into multiplication in a group algebra over $\complexes$, which is itself a direct sum of several matrix multiplication tensors. Within this framework, it is possible to reproduce many of the previously known upper bounds. On the other hand, barriers exist for the group-theoretic approach as well \cite{blasiak2017cap,blasiak2017groups}.

The closely related approach of \cite{cohn2013fast} makes use of a new ingredient, another relaxation of the rank, called the support rank (see also in \cref{sec:supportrank}). The key insight is that matrix multiplication can also be asymptotically reduced to weighted matrix multiplication. To make this quantitative, one defines the s-rank exponent of matrix multiplication $\omega_s$ as the smallest number such that, for all $\epsilon>0$, $O(n^{\omega_s+\epsilon})$ arithmetic operations are sufficient for computing weighted matrix products of the form
\begin{equation}\label{eq:weightedmm}
\left(\sum_{j=1}^n\lambda_{i,j,k}A_{i,j}B_{j,k}\right)_{i,k\in[n]},
\end{equation}
where $(A_{i,j})_{i,j\in[n]}$ and $(B_{i,j})_{i,j\in[n]}$ are the input matrices and $\lambda_{i,j,k}$ are arbitrary nonzero numbers. Clearly, $\omega_s\le\omega$. In \cite{cohn2013fast} it is shown that $\omega\le(3\omega_s-2)/2$, which implies that weighted matrix multiplication tensors can also be used as starting tensors, with an overhead that vanishes as $\omega\to 2$ (equivalently: $\omega_s\to 2$). At present, the best upper bounds on $\omega_s$ come from upper bounds on $\omega$.

In this paper, we study a problem raised in \cite{schonhage1981partial} on more general partial matrix multiplication tensors. In that paper, Sch\"onhage considered the task of multiplying matrices that are partially filled, i.e., some of the entries in a given pattern are fixed to zeros. With $n\times n$ matrices, the number of nonzero entries in the basis of matrix units is a number $f<n^3$ determined by the two patterns. He showed that, if the product of such partially filled matrices can be computed using $l$ multiplications (in a bilinear computation), then $\omega\le 3\frac{\log l}{\log f}$ \cite[Theorem 4.1]{schonhage1981partial}. However, there is an asymmetry in this result, which is apparent in the equivalent trilinear formulation, which amounts to computing the trace of $ABC$, where $A$ and $B$ are partially filled matrices, while $C$ is a full $n\times n$ matrix. This asymmetry was pointed out in \cite[Section 8]{schonhage1981partial}, but the generalization of the result to this case was left as an open problem.

We show that it is possible to reduce large \emph{weighted} matrix multiplications to tensor powers of partial (and weighted) matrix multiplications, of the form \eqref{eq:weightedmm} with arbitrary coefficients. Our main result is that if the number of nonzero coefficients in \eqref{eq:weightedmm} is $f$, and the (support) rank of the corresponding tensor is $l$, then $\omega_s\le 3\frac{\log l}{\log f}$ (\cref{thm:srankexponentbound}). In particular, if one could show that $\omega_s=\omega$, then our result would yield the same bound as  \cite{schonhage1981partial} for the partial matrix multiplications considered in that paper, but at the same time apply for general partial matrix multiplications.

The proof is based on the observation that if $\Lambda\subseteq I\times J\times K$ is the pattern of nonzero coefficients and $f:I\to I'$, $g:J\to J'$, $h:K\to K'$ are arbitrary maps, then the support rank of partial matrix multiplication with pattern $\Lambda$ is an upper bound on the support rank of the partial matrix multiplication with pattern $(f\times g\times h)(\Lambda)$. Applying this to $\Lambda^n$ and randomly chosen $f,g,h$, we show that it is possible to simulate total (weighted) matrix multiplication with the number of nonzero entries growing as $\lvert\Lambda\rvert^{n-o(n)}$. Moreover, we characterize the trade-off between the rates $(a,b,c)$ such that weighted multiplication of $2^{an+o(n)}\times 2^{bn+o(n)}$ and $2^{bn+o(n)}\times 2^{cn+o(n)}$ is possible with this method, showing in particular that the bound on $\omega_s$ is the best possible within this framework.

We note that, while a direct sum of disjoint (partial or total) matrix multiplications is itself a partial matrix multiplication, in this special case a better bound on $\omega$ ($\omega_s$) is provided by the asymptotic sum inequality. In this sense, as noted in \cite[Section 8]{schonhage1981partial}, the obtained bounds do not behave in a continuous way, since disjointness of the pattern can be destroyed by adding just a single additional nonzero entry. To the best of our knowledge, it is still an open problem to find an interpolating result that behaves in a more robust way, and this discontinuity is even more pronounced for the more general partial matrix multiplication tensors considered in the present paper.

The outline of this paper is the following. In \cref{sec:preliminaries} we summarize the relevant concepts and notation related to tensors and from information theory. In \cref{sec:supportrank} we review the notion of support rank, and prove a basic monotonicity result relating the support ranks for different supports. In \cref{sec:partialmm} we specialize to partial matrix multiplication tensors, prove our main result, and briefly explain how it can be combined with the asymptotic sum inequality and with the laser method.

\section{Preliminaries}\label{sec:preliminaries}

\subsection{Tensors}

Throughout $\mathbb{F}$ will be some fixed but arbitrary infinite field. There exist several essentially equivalent ways to view tensors: as multilinear maps or forms, arrays of numbers, or elements of a tensor product of vector spaces. We adopt the convention that a tensor (of order $3$) is a trilinear form
\begin{equation}
T=\sum_{\substack{i\in I  \\  j\in J  \\  k\in K}}t_{i,j,k}x_iy_jz_k,
\end{equation}
where $t_{i,j,k}\in\mathbb{F}$ are the coefficients (or coordinates) of the tensor, and $\{x_i\}_{i\in I}$, $\{y_j\}_{j\in J}$, and $\{z_k\}_{k\in K}$ are formal variables indexed by finite index sets $I,J,K$. If two tensors differ only in a renaming of the variables (in such a way that the three sets of variables are not mixed), then the tensors are \emph{isomorphic}. As we are interested in tensor properties that are invariant under isomorphism, we will routinely replace a tensor with an isomorphic one when convenient.

The \emph{unit tensors} are the tensors
\begin{equation}
\unittensor{r}=\sum_{i=1}^r x_iy_iz_i.
\end{equation}
The \emph{direct sum} of tensors, $T_1\oplus T_2$, can be seen as the usual sum of trilinear forms provided that the sets of variables are disjoint (which can be assumed after replacing the tensors with isomorphic ones). The \emph{tensor product} of $T_1$ and $T_2$ can be obtained from the usual product as polynomials by replacing the products of the form $x_{i_1}x_{i_2}$ by new variables $x_{i_1,i_2}$, indexed by pairs, and similarly replacing $y_{j_1}y_{j_2}$ with $y_{j_1,j_2}$ and $z_{k_1}z_{k_2}$ with $z_{k_1,k_2}$.

Any tensor can be written (in a non-unique way) as a sum of \emph{simple tensors}, which arise as the product of three linear maps, one in each set of variables:
\begin{equation}
T=\sum_{m=1}^L \left(\sum_{i\in I}a^{(m)}_ix_i\right)\left(\sum_{j\in J}b^{(m)}_jy_j\right)\left(\sum_{k\in K}c^{(m)}_kz_k\right).
\end{equation}
The smallest possible $L$ for which such a decomposition exists is the \emph{rank} of $T$, which we denote by $\tensorrank(T)$. The rank is subadditive under the direct sum and submultiplicative under the tensor product. We define the \emph{asymptotic rank} of $T$ as $\asymptoticrank(T)=\lim_{n\to\infty}\sqrt[n]{\tensorrank(T^{\otimes n})}=\inf_{n\ge 1}\sqrt[n]{\tensorrank(T^{\otimes n})}$, where the existence of the limit and the equality with the infimum is due to the Fekete lemma.

The set of tensors of rank at most $r$ is in general not closed. Over an algebraically closed field, the \emph{border rank} $\borderrank(T)$ of a tensor $T$ is the smallest $r$ such that $T$ lies in the closure of the set of tensors of rank at most $r$. The border rank is also subadditive under the direct sum and submultiplicative under the tensor product, and leads to the same asymptotic quantity: $\lim_{n\to\infty}\sqrt[n]{\borderrank(T^{\otimes n})}=\inf_{n\ge 1}\sqrt[n]{\borderrank(T^{\otimes n})}=\asymptoticrank(T)$. The there quantities satisfy the inequalities $\asymptoticrank(T)\le\borderrank(T)\le\tensorrank(T)$.

The \emph{matrix multiplication tensors}, parametrized by positive integers $l,m,n$, are
\begin{equation}
\mmtensor{l}{m}{n}=\sum_{i=1}^l\sum_{j=1}^m\sum_{k=1}^n x_{i,j}y_{j,k}z_{k,i},
\end{equation}
and correspond to the multiplication of an $l\times m$ and an $m\times n$ matrix as a bilinear map. An important property is that $\mmtensor{l_1}{m_1}{n_1}\otimes\mmtensor{l_2}{m_2}{n_2}$ is isomorphic to $\mmtensor{l_1l_2}{m_1m_2}{n_1n_2}$, which can be understood as multiplication of partitioned matrices. The \emph{exponent of matrix multiplication} is $\omega=\log\asymptoticrank(\mmtensor{2}{2}{2})$, where the base of the logarithm is $2$. The value of $\omega$ is known to be between $2$ and $2.371552$ \cite{williams2024new}.

\subsection{Types}

We will make use of some of the fundamental combinatorial tools from information theory, more precisely the method of types. For a thorough introduction we refer to \cite[Chapter 2]{csiszar2011information}.

We denote the set of probability distributions on a finite set $\mathcal{X}$ by $\distributions(\mathcal{X})$. We use the notation $P(x)$ for the probability of $\{x\}$ under the measure $P$, and think of $P$ as a function $\mathcal{X}\to\nonnegativereals$. The \emph{support} of $P$ is $\support P=\setbuild{x\in\mathcal{X}}{P(x)\neq 0}$.

An \emph{$n$-type} (over $\mathcal{X}$) is an element of $\distributions(\mathcal{X})$ that assigns probabilites that are multiples of $\frac{1}{n}$ to the elements, i.e., such that $nP$ is integer valued. The set of $n$-types over $\mathcal{X}$ will be denoted by $\distributions[n](\mathcal{X})$. The number of $n$-types can be estimated as $\left\lvert\distributions[n](\mathcal{X})\right\rvert\le(n+1)^{\lvert\mathcal{X}\rvert}$ \cite[Lemma 2.2]{csiszar2011information}.

Given $P\in\distributions[n](\mathcal{X})$, we can consider the set of strings in $\mathcal{X}^n$ in which each element $x\in\mathcal{X}$ occurs exactly $nP(x)$ times. This set is the \emph{type class} $\typeclass{n}{P}$. Another way to view type classes is that they are precisely the orbits in $\mathcal{X}^n$ under the action of the symmetric group $S_n$ permuting the factors. The cardinality of a type class can be estimated as \cite[Lemma 2.3]{csiszar2011information}
\begin{equation}
\frac{1}{(n+1)^{\lvert\mathcal{X}\rvert}}2^{n\entropy(P)}\le\lvert\typeclass{n}{P}\rvert=\binom{n}{nP}=\frac{n!}{\prod_{x\in\mathcal{X}}(nP(x))!}\le 2^{n\entropy(P)},
\end{equation}
where
\begin{equation}
\entropy(P)=-\sum_{x\in\mathcal{X}}P(x)\log P(x)
\end{equation}
is the (Shannon) \emph{entropy} measured in bits (i.e., with logarithm to base $2$). Here we adopt the convention that $P(x)\log P(x)=0$ when $P(x)=0$.

If $P$ is a distribution on (a subset of) a product set, e.g., $P\in\distributions(I\times J\times K)$, then we can form its \emph{marginals}, for instance
\begin{align}
P_I(i) & = \sum_{\substack{j\in J  \\  k\in K}}P(i,j,k)  \\
\intertext{and}
P_{IJ}(i,j) & = \sum_{k\in K}P(i,j,k).
\end{align}
In this setting it is often convenient to use notations such as $\entropy(I)_P=\entropy(P_I)$ and $\entropy(IJ)_P=\entropy(P_{IJ})$ for the marginals. In particular, $\entropy(P)=\entropy(IJK)_P$ for $P\in\distributions(I\times J\times K)$. We also define the \emph{conditional entropy} $\entropy(I|J)_P=\entropy(IJ)_P-\entropy(J)_P$, $\entropy(K|IJ)_P=\entropy(IJK)_P-\entropy(IJ)_P$, etc. Entropies and conditional entropies are nonnegative.

The conditional entropy appears when estimating the cardinality of the fibers of the canonical surjection map $\pi:\typeclass{n}{P_{IJ}}\to\typeclass{n}{P_J}$ (obtained by restricting the map $(I\times J)^n=I^n\times J^n\to J_n$). Since the map is $S_n$ equivariant and $\typeclass{n}{P_J}$ is a single orbit, every fiber $\pi^{-1}(i)$ has the same cardinality, equal to $\lvert\typeclass{n}{P_{IJ}}\rvert/\lvert\typeclass{n}{P_J}\rvert\approx 2^{n\entropy(I|J)_P}$, where the approximation means ignoring factors polynomial in $n$.

\section{Support rank}\label{sec:supportrank}

We follow \cite[Section 3]{cohn2013fast} except that it seems more appropriate to define the support rank as a property of the support (as opposed to a tensor).
\begin{definition}
Let $\Phi\subseteq I\times J\times K$, where $I$, $J$, and $K$ are finite sets. If
\begin{equation}
T=\sum_{(i,j,k)\in\Phi} t_{i,j,k}x_iy_jz_k,
\end{equation}
where the coeffcients satisfy $t_{i,j,k}\neq 0$ for all $(i,j,k)\in\Phi$, then we say that the \emph{support} of $T$ is $\Phi$, and write $\support(T)=\Phi$. The \emph{support rank} of $\Phi$ is
\begin{equation}
\supportrank(\Phi)=\min\setbuild{\tensorrank(T)}{\support(T)=\Phi}.
\end{equation}

In a similar vein, the \emph{border support rank} of $\Phi$ is
\begin{equation}
\bordersupportrank(\Phi)=\min\setbuild{\borderrank(T)}{\support(T)=\Phi}.
\end{equation}
\end{definition}

If $\Phi_1\subseteq I_1\times J_1\times K_1$ and $\Phi_2\subseteq I_2\times J_2\times K_2$, then we can form the sum $\Phi_1+\Phi_2\subseteq(I_1\sqcup I_2)\times(J_1\sqcup J_2)\times(K_1\sqcup K_2)$ by embedding both subsets in the block diagonal, and the product $\Phi_1\times\Phi_2\subseteq(I_1\times I_2)\times(J_1\times J_2)\times(K_1\times K_2)$ by rearranging and grouping the factors as indicated. Clearly, $\support(T_1\oplus T_2)=\support(T_1)+\support(T_2)$ and $\support(T_1\otimes T_2)=\support(T_1)\times\support(T_2)$, and it follows that the support rank and the border support rank are both subadditive and submultiplicative. In analogy with the tensor rank, we can therefore introduce the \emph{asymptotic support rank}
\begin{equation}
\asymptoticsupportrank(\Phi)=\lim_{n\to\infty}\sqrt[n]{\supportrank(\Phi)}=\lim_{n\to\infty}\sqrt[n]{\bordersupportrank(\Phi)}=\inf_{n\ge 1}\sqrt[n]{\supportrank(\Phi)}=\inf_{n\ge 1}\sqrt[n]{\bordersupportrank(\Phi)}.
\end{equation}

In particular, the \emph{support rank exponent of matrix multiplication} (or exponent of weighted matrix multiplication) is $\omega_2=\log\asymptoticsupportrank(\support(\mmtensor{2}{2}{2}))$. It is clear that $2\le\omega_s\le\omega$ (more generally, for any tensor $T$ we have $\asymptoticsupportrank(\support(T))\le\asymptoticrank(T)$). Cohn and Umans proved that the two exponents satisfy the inequality $\omega-2\le\frac{3}{2}(\omega_s-2)$ \cite[Theorem 3.6]{cohn2013fast}. In particular, $\omega_s=2$ if and only if $\omega=2$.

More generally, some applications require multiplying $n^a\times n^b$ and $n^b\times n^c$ matrices, where $a,b,c\ge 0$ are given constants, as $n\to\infty$. In this setting, one defines the exponents
\begin{equation}
\omega(a,b,c)=\lim_{n\to\infty}\frac{1}{n}\log\tensorrank(\mmtensor{\lfloor 2^{an}\rfloor}{\lfloor 2^{bn}\rfloor}{\lfloor 2^{cn}\rfloor})
\end{equation}
and
\begin{equation}
\omega_s(a,b,c)=\lim_{n\to\infty}\frac{1}{n}\log\supportrank(\support(\mmtensor{\lfloor 2^{an}\rfloor}{\lfloor 2^{bn}\rfloor}{\lfloor 2^{cn}\rfloor})).
\end{equation}
Both are homogeneous of degree $1$, jointly convex, and symmetric in the three arguments, therefore by symmetrization we obtain the inequalities $\omega=\omega(1,1,1)\le\frac{3}{a+b+c}\omega(a,b,c)$ and $\omega_s=\omega_s(1,1,1)\le\frac{3}{a+b+c}\omega_s(a,b,c)$. Clearly, $\omega_s(a,b,c)\le\omega(a,b,c)$ for all $a,b,c\ge 0$. We are not aware of any bound in the other direction except the aforementioned one between $\omega(1,1,1)=\omega$ and $\omega_s(1,1,1)=\omega_s$.

The tensor rank is nonincreasung under tensor restriction (specialization), and the border rank is nonincreasing under tensor degeneration. An analogous notion on the level of supports is combinatorial degeneration \cite[(15.29) Definition]{burgisser1997algebraic}, and it follows from a standard proposition that border support rank is nonincreasing under combinatorial degeneration \cite[(15.30) Proposition]{burgisser1997algebraic}. In the following proposition we show that the support rank is nonincreasing when taking direct images under product maps.
\begin{proposition}\label{prop:supportrankproductmap}
Let $I$, $J$, $K$, $I'$, $J'$, and $K'$ be finite sets, $\Phi\subseteq I\times J\times K$, and let $f:I\to I'$, $g:J\to J'$, and $h:K\to K'$ be functions. Then $\supportrank(\Phi)\ge\supportrank((f\times g\times h)(\Phi))$.
\end{proposition}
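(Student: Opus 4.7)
The plan is to push forward a rank-optimal realization of $\Phi$ through the coordinatewise map $f\times g\times h$, and to use a diagonal rescaling to avoid accidental cancellations in the fibers. First I would consider the linear map $\Pi\colon\mathbb{F}^I\otimes\mathbb{F}^J\otimes\mathbb{F}^K\to\mathbb{F}^{I'}\otimes\mathbb{F}^{J'}\otimes\mathbb{F}^{K'}$ induced by $x_i\mapsto x'_{f(i)}$, $y_j\mapsto y'_{g(j)}$, $z_k\mapsto z'_{h(k)}$. Because $\Pi$ is the tensor product of three linear maps, it sends simple tensors to simple tensors, so $\tensorrank(\Pi(T))\le\tensorrank(T)$ and $\support(\Pi(T))\subseteq(f\times g\times h)(\support(T))$ for every $T$.

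Now I would choose $T=\sum_{(i,j,k)\in\Phi}t_{i,j,k}x_iy_jz_k$ realizing $\supportrank(\Phi)=r$, so that $t_{i,j,k}\neq 0$ on $\Phi$ and $\tensorrank(T)=r$. The issue to address is that the inclusion $\support(\Pi(T))\subseteq(f\times g\times h)(\Phi)$ may be strict if the contributions from different preimages cancel. To rule this out, I would introduce rescaling parameters $\alpha_i,\beta_j,\gamma_k\in\mathbb{F}\setminus\{0\}$ and consider
\[
T_{\alpha,\beta,\gamma}=\sum_{(i,j,k)\in\Phi}t_{i,j,k}\alpha_i\beta_j\gamma_k\,x_iy_jz_k,
\]
which is obtained from $T$ by invertible diagonal changes of variables in each factor and hence still has support $\Phi$ and rank at most $r$.

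The coefficient of $x'_{i'}y'_{j'}z'_{k'}$ in $\Pi(T_{\alpha,\beta,\gamma})$ equals
\[
P_{i',j',k'}(\alpha,\beta,\gamma)=\sum_{\substack{(i,j,k)\in\Phi\\f(i)=i',\,g(j)=j',\,h(k)=k'}}t_{i,j,k}\alpha_i\beta_j\gamma_k,
\]
and the monomials $\alpha_i\beta_j\gamma_k$ appearing in this sum are pairwise distinct with nonzero scalar coefficients $t_{i,j,k}$, so $P_{i',j',k'}$ is a nonzero polynomial in the variables $\alpha,\beta,\gamma$ precisely when $(i',j',k')\in(f\times g\times h)(\Phi)$.

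Since $\mathbb{F}$ is infinite, the finite union of the zero loci of these nonzero polynomials and of the coordinate hyperplanes $\{\alpha_i=0\}$, $\{\beta_j=0\}$, $\{\gamma_k=0\}$ cannot exhaust $\mathbb{F}^I\times\mathbb{F}^J\times\mathbb{F}^K$, so I can evaluate at any point outside this union to obtain a tensor $\Pi(T_{\alpha,\beta,\gamma})$ of rank at most $r$ whose support is exactly $(f\times g\times h)(\Phi)$, yielding $\supportrank((f\times g\times h)(\Phi))\le r$. The only real obstacle is the cancellation phenomenon in the pushforward, which is precisely what the rescaling trick is designed to eliminate, and the infinite-field hypothesis assumed throughout the paper is exactly what makes such a generic choice of $(\alpha,\beta,\gamma)$ available.
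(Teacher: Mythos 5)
Your proposal is correct and follows essentially the same route as the paper: both push a rank-optimal realization of $\Phi$ forward along $f\times g\times h$ after a generic diagonal rescaling, observe that each coefficient in the image is a nonzero polynomial in the scaling parameters, and invoke the infinite-field hypothesis (Schwartz--Zippel) to kill all cancellations simultaneously. The only cosmetic difference is that you argue non-vanishing via the pairwise distinctness of the monomials $\alpha_i\beta_j\gamma_k$, whereas the paper exhibits an explicit evaluation point; both are valid.
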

\begin{proof}
Let
\begin{equation}
T=\sum_{(i,j,k)\in\Phi} t_{i,j,k}x_iy_jz_k
\end{equation}
be a tensor such that the coefficients $t_{i,j,k}$ are nonzero for all $(i,j,k)\in\Phi$ and $\tensorrank(T)=\supportrank(\Phi)$. We consider the following restrictions of $T$, where $a_i,b_j,c_k\in\mathbb{F}$:
\begin{equation}
\begin{split}
T'
 & = \sum_{(i,j,k)\in\Phi} t_{i,j,k}(a_i x_{f(i)})(b_j y_{g(i)})(c_k z_{f(k)})  \\
 & = \sum_{i'\in I'}\sum_{j'\in J'}\sum_{k'\in K'}\sum_{(i,j,k)\in\Phi\cap(f^{-1}(i')\times g^{-1}(j')\times h^{-1}(k'))} (a_i b_j c_k t_{i,j,k})x_{i'}y_{j'}z_{k'}
\end{split}
\end{equation}
For any choice of the field elements $a_i,b_j,c_k$ we have $\tensorrank(T')\le\tensorrank(T)$ (since $T'$ is a restriction of $T$), and $\support(T')\subseteq (f\times g\times h)(\Phi)$ by construction. Therefore the claim follows if there is a choice such that $\support(T')=(f\times g\times h)(\Phi)$.

Note that for all $(i',j',k')\in(f\times g\times h)(\Phi)$, the coefficient of $x_{i'}y_{j'}z_{k'}$ is a (cubic) polynomial in the indeterminates $a_i,b_j,c_k$ ($i\in I$, $j\in J$, $k\in K$). These polynomials do not vanish identically: for $(i',j',k')\in(f\times g\times h)(\Phi)$ and $(i_0,j_0,k_0)\in\Phi\cap(f^{-1}(i')\times g^{-1}(j')\times h^{-1}(k'))$, we can set $a_i=1$ if $i=i_0$ and $a_i=0$ otherwise, and similarly for $b_j$ and $c_k$, then the coefficient of $(i',j',k')$ is equal to $t_{i_0,j_0,k_0}\neq 0$. By the Schwartz--Zippel lemma (recall our assumption that $\lvert\mathbb{F}\rvert=\infty$), we can conclude that there is a choice of $a_i,b_j,c_k$ such that none of these polynomials vanish, i.e., $\support(T')=(f\times g\times h)(\Phi)$.
\end{proof}

\section{Partial matrix multiplication}\label{sec:partialmm}

Sch\"onhage introduced partial matrix multiplication in the sense of multiplication of matrices with some of the entries filled with zeros \cite{schonhage1981partial}. Such an operation is specified by positive integers $l,m,n$ and a pair of patterns $A\subseteq[l]\times[m]$ and $B\subseteq[j]\times[k]$. Using the indicator function
\begin{equation}
1_A(i,j)=\begin{cases}
1 & \text{if $(i,j)\in A$}  \\
0 & \text{otherwise,}
\end{cases}
\end{equation}
and the similarly defined $1_B$, the corresponding tensor is written as
\begin{equation}
\partialmmtensor{A,B}=\sum_{i=1}^l\sum_{j=1}^m\sum_{k=1}^n 1_A(i,j)1_B(j,k)x_{i,j}y_{j,k}z_{k,i}.
\end{equation}
The total number of terms is
\begin{equation}
f=\sum_{i=1}^l\sum_{j=1}^m\sum_{k=1}^n 1_A(i,j)1_B(j,k),
\end{equation}
which is also an upper bound on the tensor rank, corresponding to the straightforward algorithm computing the partial matrix product (when $A=[l]\times[m]$ and $B=[j]\times[k]$, this number is $lmn$). Partial matrix multiplications are useful for bounding the exponent of matrix multiplication by the following result:
\begin{theorem}[Sch\"onhage, {\cite[Theorem 4.1.]{schonhage1981partial}}]
If $\partialmmtensor{A,B}$ is a partial matrix multiplication tensor containing $f$ terms, then $\omega\le 3\frac{\log\tensorrank(\partialmmtensor{A,B})}{\log f}$.
\end{theorem}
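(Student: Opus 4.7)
The plan is to embed a large direct sum of total matrix multiplications as a restriction of a high tensor power of $\partialmmtensor{A,B}$, and then invoke the asymptotic sum inequality. Set $l = \tensorrank(\partialmmtensor{A,B})$ and $\Phi = \{(i,j,k) : 1_A(i,j)\,1_B(j,k) = 1\}$, so $\lvert\Phi\rvert = f$. For any $N$, submultiplicativity of rank gives $\tensorrank(\partialmmtensor{A,B}^{\otimes N}) \le l^N$, and $\partialmmtensor{A,B}^{\otimes N}$ is itself a partial matrix multiplication tensor with $f^N$ nonzero coefficients indexed by $\Phi^N$.

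The decisive step --- which I expect to be the main obstacle --- is to exhibit a restriction (by zero-substitutions on the variables) of $\partialmmtensor{A,B}^{\otimes N}$ of the form $\bigoplus_\alpha \mmtensor{p_\alpha}{q_\alpha}{r_\alpha}$ with total volume $\sum_\alpha p_\alpha q_\alpha r_\alpha \ge f^N / 2^{o(N)}$. My approach is via the method of types: for a chosen $P \in \distributions(\Phi)$, zero out all variables whose index sequences do not have marginal types $P_{IJ}, P_{JK}, P_{KI}$. The remaining sub-tensor is supported on $\typeclass{N}{P_{IJ}} \times \typeclass{N}{P_{JK}} \times \typeclass{N}{P_{KI}}$, with nonzero coefficients in bijection with the type class $\typeclass{N}{P}$, whose size is at least $2^{N\entropy(P)}/(N+1)^{\lvert\Phi\rvert}$. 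Disjointness of the matrix multiplication summands in the resulting direct sum is the delicate combinatorial point: one typically selects, via a Salem--Spencer-style $3$-AP-free subset of $[N]$, a large sub-orbit of the $S_N$-action on the type class on which the three marginal projections are simultaneously injective, so that the sub-tensor splits as a direct sum of total matrix multiplications of known dimensions without index collisions.

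Given such a direct sum of rank at most $l^N$, the asymptotic sum inequality yields $\sum_\alpha (p_\alpha q_\alpha r_\alpha)^{\omega/3} \le l^N$. Since $\omega/3 \in [2/3,1]$, the subadditivity inequality $(\sum x_\alpha)^{\omega/3} \le \sum x_\alpha^{\omega/3}$ for nonnegative $x_\alpha$ gives
\begin{equation}
\left(\sum_\alpha p_\alpha q_\alpha r_\alpha\right)^{\omega/3} \le l^N.
\end{equation}
Taking $P$ uniform on $\Phi$ makes $\entropy(P) = \log f$, so the volume lower bound becomes $\sum_\alpha p_\alpha q_\alpha r_\alpha \ge f^N/2^{o(N)}$. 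Combining with the previous display and taking $N$-th roots as $N \to \infty$ gives $f^{\omega/3} \le l$, i.e., $\omega \le 3 \log l / \log f$.
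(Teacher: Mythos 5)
The statement is quoted from Sch\"onhage and is not proved in this paper at all; the paper's own machinery (random surjections $f_n\times g_n\times h_n$ carrying $\Lambda^n$ onto a full rectangle, combined with \cref{prop:supportrankproductmap}) deliberately yields only the support-rank analogue \cref{thm:srankexponentbound}, because that mechanism controls the support but not the values of the surviving coefficients. Your outer scaffolding is fine: $\tensorrank(\partialmmtensor{A,B}^{\otimes N})\le l^N$, the asymptotic sum inequality, and subadditivity of $t\mapsto t^{\omega/3}$ do combine correctly \emph{once} one has restricted $\partialmmtensor{A,B}^{\otimes N}$ to a variable-disjoint direct sum of total matrix multiplications of total volume $f^{N-o(N)}$. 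The gap is that your construction of that direct sum does not produce one. After restricting to the three marginal type classes (whose surviving support, incidentally, is generally larger than $\typeclass{N}{P}$, since distinct joint types can share all three pairwise marginals), you propose to pass to a subset on which ``the three marginal projections are simultaneously injective.'' But a set of support triples on which all three projections are injective is a set of triples pairwise sharing no variables, i.e.\ its tensor is a unit tensor $\unittensor{M}$ --- a direct sum of copies of $\mmtensor{1}{1}{1}$. The asymptotic sum inequality then gives only $M\le l^N$, which is useless; worse, $M$ can never exceed the number of $x$-variables, $\lvert A\rvert^N$, which is exponentially smaller than $f^N$ in every nontrivial case (already for total multiplication, $\lvert A\rvert=n^2$ versus $f=n^3$). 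A genuine $\mmtensor{p}{q}{r}$ inside the support is a combinatorial box on which the marginal projections are highly \emph{non}-injective, so injectivity is the opposite of what you need. The Salem--Spencer device of Coppersmith--Winograd serves to diagonalize an \emph{outer}, block-level support whose blocks are already matrix multiplications; here there is no block structure for it to act on, and ``a $3$-AP-free subset of $[N]$'' is not the object that enters that construction in any case.

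There is also a structural warning sign that the missing step cannot be filled in as sketched: nowhere do you use that the pattern has the product form $\Phi=\setbuild{(i,j,k)}{(i,j)\in A,\ (j,k)\in B}$. A type-class argument that ignored this would apply verbatim to an arbitrary $\Lambda\subseteq I\times J\times K$ and would therefore settle the problem from Section~8 of Sch\"onhage's paper, which \cref{sec:partialmm} explicitly records as open for ordinary rank (and which the present paper resolves only for support rank). The product form is exactly the handle the known proof uses: writing $a_j=\lvert\setbuild{i}{(i,j)\in A}\rvert$ and $b_j=\lvert\setbuild{k}{(j,k)\in B}\rvert$, one has $\partialmmtensor{A,B}=\sum_{j}\mmtensor{a_j}{1}{b_j}$, a sum of outer products with disjoint $x$- and $y$-variables that interact only through the (unconstrained) $z$-variables, and the real work --- done by Sch\"onhage via the $\tau$-theorem after passing to a type class of exponents $\mathbf{j}\in[m]^N$ --- is to disentangle those shared $z$-variables so that the pieces assemble into full matrix multiplications. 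That is the step your plan correctly flags as ``the main obstacle,'' and it remains open in the proposal.
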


We consider partial matrix multiplications in the following more general sense. Let $l,m,n$ be positive integers and $\Lambda\subseteq[l]\times[m]\times[n]$. We define
\begin{equation}
\partialmmtensor{\Lambda}=\sum_{i=1}^l\sum_{j=1}^m\sum_{k=1}^n 1_\Lambda(i,j,k)x_{i,j}y_{j,k}z_{k,i}.
\end{equation}
It is not known if the above theorem can be generalized to an upper bound on $\omega$ in terms of the tensors $\partialmmtensor{\Lambda}$.
A special case, highlighted in \cite[Section 8]{schonhage1981partial} as an open problem, is to find a specified subset of the entries of a matrix product, where the factors are partially filled with zeros (equivalently: to compute the trace of a product of three partially filled matrices).

We introduce the following notation for the support of a partial matrix multiplication tensor:
\begin{equation}\label{eq:partialmmsupportdef}
\begin{split}
\partialmmsupport{\Lambda}
 & = \support(\partialmmtensor{\Lambda})  \\
 & = \setbuild{((i,j),(j,k),(k,i))\in(I\times J)\times(J\times K)\times(K\times I)}{(i,j,k)\in\Lambda}.
\end{split}
\end{equation}
If $\Lambda_1\subseteq I_1\times J_1\times K_1$ and $\Lambda_2\subseteq I_2\times J_2\times K_2$, then we may form $\Lambda_1\times\Lambda_2\in(I_1\times I_2)\times(J_1\times J_2)\times(K_1\times K_2)$ (rearranging the factors similarly as with the supports), and we clearly have $\partialmmtensor{\Lambda_1\times\Lambda_2}=\partialmmtensor{\Lambda_1}\otimes\partialmmtensor{\Lambda_2}$ and $\partialmmsupport{\Lambda_1\times\Lambda_2}=\partialmmsupport{\Lambda_1}\times\partialmmsupport{\Lambda_2}$ (the sum operations behave in a similar way). Total matrix multiplication is the special case when $\Lambda=I\times J\times K$. Combining this fact with the notations from the previous section, the support rank exponent of matrix multiplication can be characterized as $\omega_s=\log\asymptoticsupportrank(\partialmmsupport{[2]\times[2]\times[2]})$.

The construction $\Lambda\mapsto\partialmmsupport{\Lambda}$ behaves well with respect to product maps as well, which can be seen directly from \eqref{eq:partialmmsupportdef}:
\begin{proposition}\label{prop:partialmmproductmap}
Let $I$, $J$, $K$, $I'$, $J'$, and $K'$ be finite sets, $\Lambda\subseteq I\times J\times K$, and let $f:I\to I'$, $g:J\to J'$, and $h:K\to K'$ be functions. Then
\begin{equation}
\partialmmsupport{(f\times g\times h)(\Lambda)}
 = ((f\times g)\times(g\times h)\times(h\times f))(\partialmmsupport{\Lambda}).
\end{equation}
\end{proposition}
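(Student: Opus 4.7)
The plan is to verify the claimed set equality by a direct computation from the definition \eqref{eq:partialmmsupportdef}, with no further ingredient needed. First I would unfold the right-hand side: a generic element of $\partialmmsupport{\Lambda}$ has the form $((i,j),(j,k),(k,i))$ for some $(i,j,k)\in\Lambda$, and applying the map $(f\times g)\times(g\times h)\times(h\times f)$ sends it to
\begin{equation}
\bigl((f(i),g(j)),\,(g(j),h(k)),\,(h(k),f(i))\bigr).
\end{equation}
Thus the image equals $\{((f(i),g(j)),(g(j),h(k)),(h(k),f(i))) : (i,j,k)\in\Lambda\}$.

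Next I would unfold the left-hand side. By definition $(f\times g\times h)(\Lambda) = \{(f(i),g(j),h(k)) : (i,j,k)\in\Lambda\}$, so applying \eqref{eq:partialmmsupportdef} gives
\begin{equation}
\partialmmsupport{(f\times g\times h)(\Lambda)} = \bigl\{((i',j'),(j',k'),(k',i')) : (i',j',k')\in(f\times g\times h)(\Lambda)\bigr\},
\end{equation}
which after reparametrization is exactly $\{((f(i),g(j)),(g(j),h(k)),(h(k),f(i))) : (i,j,k)\in\Lambda\}$. Comparing with the previous paragraph finishes the argument.

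The only thing to watch is bookkeeping of the indices: the partial matrix multiplication support has the special structure that the second coordinate of the first factor must agree with the first coordinate of the second factor (and similarly around the triangle). This compatibility is preserved precisely because the same function $g$ is applied to both occurrences of the $J$-index (and analogously for $h$ on the $K$-index and $f$ on the $I$-index), which is exactly how the product map $(f\times g)\times(g\times h)\times(h\times f)$ is set up. There is no genuine obstacle; the proposition is essentially a consistency check that the construction $\Lambda\mapsto\partialmmsupport{\Lambda}$ is natural with respect to product maps, and the proof consists of a single line once the definitions are chased.
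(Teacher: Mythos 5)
Your proof is correct and is exactly the argument the paper intends: the paper gives no separate proof, noting only that the identity ``can be seen directly from \eqref{eq:partialmmsupportdef}'', and your unfolding of both sides to the common set $\{((f(i),g(j)),(g(j),h(k)),(h(k),f(i))) : (i,j,k)\in\Lambda\}$ is that direct verification. Nothing is missing.
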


\begin{corollary}\label{cor:partialmmproductmapsrank}
In the setting of \cref{prop:partialmmproductmap}, $\supportrank(\partialmmsupport{\Lambda})\ge\supportrank(\partialmmsupport{(f\times g\times h)(\Lambda)})$.
\end{corollary}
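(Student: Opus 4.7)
The corollary is essentially an immediate consequence of the two propositions that precede it, so my plan is mainly bookkeeping.

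First, I would recall that $\partialmmsupport{\Lambda}$ lives inside $(I\times J)\times(J\times K)\times(K\times I)$, so in order to invoke \cref{prop:supportrankproductmap}, I need three functions, one on each of these three paired index sets. The natural candidates, given the data $f\colon I\to I'$, $g\colon J\to J'$, $h\colon K\to K'$, are the product maps
\begin{equation}
f\times g\colon I\times J\to I'\times J',\quad g\times h\colon J\times K\to J'\times K',\quad h\times f\colon K\times I\to K'\times I'.
\end{equation}
These are exactly the three maps appearing in the statement of \cref{prop:partialmmproductmap}.

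Next, I would apply \cref{prop:supportrankproductmap} with $\Phi=\partialmmsupport{\Lambda}$ and with the three maps above, yielding
\begin{equation}
\supportrank(\partialmmsupport{\Lambda})\ge\supportrank\bigl(((f\times g)\times(g\times h)\times(h\times f))(\partialmmsupport{\Lambda})\bigr).
\end{equation}
By \cref{prop:partialmmproductmap}, the image on the right-hand side is precisely $\partialmmsupport{(f\times g\times h)(\Lambda)}$, and substituting this identification into the displayed inequality completes the proof.

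There is no real obstacle here: the content is entirely in \cref{prop:supportrankproductmap}, which already allows arbitrary functions between arbitrary finite sets on each of the three slots, so nothing extra is required beyond observing that the three product maps $f\times g$, $g\times h$, $h\times f$ are admissible inputs to that proposition. The only thing one must be careful about is the ordering of the factors in the pairs $(I\times J, J\times K, K\times I)$ used to define $\partialmmsupport{\Lambda}$, so that the product maps match the paired index sets in the correct order; this is exactly the ordering built into \eqref{eq:partialmmsupportdef} and into \cref{prop:partialmmproductmap}.
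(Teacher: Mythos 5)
Your proposal is correct and follows exactly the paper's argument: the paper's proof is the one-line observation that the claim follows immediately from \cref{prop:supportrankproductmap,prop:partialmmproductmap}, and you have simply spelled out the bookkeeping of applying \cref{prop:supportrankproductmap} with the maps $f\times g$, $g\times h$, $h\times f$ and then identifying the image via \cref{prop:partialmmproductmap}. No issues.
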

\begin{proof}
The claim follows immediately from \cref{prop:supportrankproductmap,prop:partialmmproductmap}.
\end{proof}
This inequality means that in order to asymptotically reduce weighted matrix multiplications to partial (and possibly also weighted) matrix multiplications corresponding to powers of a given pattern $\Lambda$, it is sufficient to find maps $f_n:I^n\to A_n$, $g_n:J^n\to B_n$, and $h_n:K^n\to C_n$ such that $(f_n\times g_n\times h_n)(\Lambda^n)=A_n\times B_n\times C_n$ and the sets $A_n$, $B_n$, and $C_n$ are exponentially large. In the following we determine the precise trade-off between the exponents that are compatible with the existence of such maps.
\begin{definition}
A rate triple $(a,b,c)$ is \emph{achievable} for the pattern $\Lambda\subseteq I\times J\times K$, if there exist sequences of maps $f_n:I^n\to [\lfloor 2^{an}\rfloor]$, $g_n:J^n\to [\lfloor 2^{bn}\rfloor]$, and $h_n:K^n\to [\lfloor 2^{cn}\rfloor]$ such that $(f_n\times g_n\times h_n)(\Lambda^n)=[\lfloor 2^{an}\rfloor]\times[\lfloor 2^{bn}\rfloor]\times[\lfloor 2^{cn}\rfloor]$ for all $n$.

The \emph{capacity region} $C(\Lambda)\subseteq\nonnegativereals^3$ for $\Lambda$ is the closure of the set of achievable triples.
\end{definition}

One can see that the capacity region is convex by a time sharing argument. Indeed, suppose that $f_n:I^n\to [\lfloor 2^{an}\rfloor]$, $g_n:J^n\to [\lfloor 2^{bn}\rfloor]$, and $h_n:K^n\to [\lfloor 2^{cn}\rfloor]$ and $f'_n:I^n\to [\lfloor 2^{a'n}\rfloor]$, $g'_n:J^n\to [\lfloor 2^{b'n}\rfloor]$, and $h'_n:K^n\to [\lfloor 2^{c'n}\rfloor]$ are sequences of maps such that $(f_n\times g_n\times h_n)(\Lambda^n)=[\lfloor 2^{an}\rfloor]\times[\lfloor 2^{bn}\rfloor]\times[\lfloor 2^{cn}\rfloor]$ and $(f'_n\times g'_n\times h'_n)(\Lambda^n)=[\lfloor 2^{a'n}\rfloor]\times[\lfloor 2^{b'n}\rfloor]\times[\lfloor 2^{c'n}\rfloor]$ for all $n$. For every $\lambda\in[0,1]$, we can form the maps $f_{\lfloor\lambda n\rfloor}\times f'_{n-\lfloor\lambda n\rfloor}:I^n\to [\lfloor 2^{a\lfloor\lambda n\rfloor}\rfloor]\times[\lfloor 2^{a'(n-\lfloor n\rfloor)}\rfloor]$, etc., so that the product of the three similarly constructed maps map $\Lambda^n$ onto a product of size approximately $2^{n(\lambda a+(1-\lambda)a')}\times 2^{n(\lambda b+(1-\lambda)b')}\times 2^{n(\lambda a+(1-\lambda) c')}$.

\begin{lemma}\label{lem:singleshotproductmap}
Let $X,Y,Z$ be finite sets, $T\subseteq X\times Y\times Z$, and let $A\subset X$, $B\subseteq Y$, $C\subseteq Z$ be independent random subsets such that elements of $X$ are included in $A$ with the same probability $\alpha$, independently of each other and of the other subsets, and similarly, every element of $Y$ is in $B$ with probability $\beta$, and every element of $Z$ is in $C$ with probability $\gamma$. Then
\begin{equation}
\begin{split}
\probability\left[T\cap(A\times B\times C)=\emptyset\right]
 & \le (1-\alpha)^{\lvert\pi_X(T)\rvert}  \\
 & \qquad +(1-\beta)^{\min_{x\in\pi_X(T)}\lvert\pi_Y(T\cap(\{x\}\times Y\times Z))\rvert}  \\
 & \qquad +(1-\gamma)^{\min_{(x,y)\in\pi_{XY}(T)}\lvert\pi_Z(T\cap(\{x\}\times\{y\}\times Z))\rvert},
\end{split}
\end{equation}
where, e.g., $\pi_X:X\times Y\times Z\to X$ and $\pi_{XY}:X\times Y\to Z\to X\times Y$ denote the projection maps.
\end{lemma}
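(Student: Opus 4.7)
The plan is to bound the bad event $\{T\cap(A\times B\times C)=\emptyset\}$ by a union bound over three \emph{sequential failure stages}: no $x$-coordinate of $T$ enters $A$; some $x$ does enter $A$ but no matching $y$ enters $B$; both succeed but no matching $z$ enters $C$. Revealing $A$, $B$, $C$ in that order and using their mutual independence will then make each stage straightforward to estimate.

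Concretely, I would fix deterministic selection rules $x^*=x^*(A)\in A\cap\pi_X(T)$ (defined when the latter is nonempty) and $y^*=y^*(A,B)\in B\cap\pi_Y(T\cap(\{x^*\}\times Y\times Z))$ (defined when nonempty), and introduce the failure events
\begin{align*}
F_X &= \{A\cap\pi_X(T)=\emptyset\},\\
F_Y &= \{A\cap\pi_X(T)\ne\emptyset\}\cap\{B\cap\pi_Y(T\cap(\{x^*\}\times Y\times Z))=\emptyset\},\\
F_Z &= \{A\cap\pi_X(T)\ne\emptyset,\ B\cap\pi_Y(T\cap(\{x^*\}\times Y\times Z))\ne\emptyset\}\\
 &\qquad{}\cap\{C\cap\pi_Z(T\cap(\{x^*\}\times\{y^*\}\times Z))=\emptyset\}.
\end{align*}
On the complement of $F_X\cup F_Y\cup F_Z$ one can exhibit a witness $(x^*,y^*,z^*)\in T\cap(A\times B\times C)$, so $\{T\cap(A\times B\times C)=\emptyset\}\subseteq F_X\cup F_Y\cup F_Z$ and a union bound reduces the task to estimating the three probabilities.

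For $F_X$, independence of the inclusions gives $\Pr(F_X)=(1-\alpha)^{|\pi_X(T)|}$ directly. For $F_Y$, I would condition on $A$: since $x^*$ is a deterministic function of $A$ and $B$ is independent of $A$, the conditional probability of the $B$-clause is $(1-\beta)^{|\pi_Y(T\cap(\{x^*(A)\}\times Y\times Z))|}$, which is at most $(1-\beta)^{\min_{x\in\pi_X(T)}|\pi_Y(T\cap(\{x\}\times Y\times Z))|}$ because $1-\beta\in[0,1]$ and $x^*(A)\in\pi_X(T)$. Averaging over $A$ gives the second term. The estimate for $F_Z$ is completely parallel after conditioning on $(A,B)$ and invoking independence of $C$, using the key observation that $(x^*,y^*)\in\pi_{XY}(T)$ on the event $F_Z$, so the minimum over $\pi_{XY}(T)$ is indeed a valid lower bound on the actual exponent.

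The main obstacle is not any deep probabilistic tool but the bookkeeping: one has to check that $F_X\cup F_Y\cup F_Z$ really covers the bad event, that the deterministic selection rules $x^*$ and $y^*$ are well-defined wherever they are invoked, and that the monotonicity $(1-\beta)^t\le(1-\beta)^{t'}$ for $t\ge t'\ge 0$ correctly replaces the actual exponent by the worst-case exponent appearing in the statement. No tail inequality beyond the elementary $\Pr[S\cap R=\emptyset]=(1-p)^{|R|}$ for a $p$-random subset $S$ is needed.
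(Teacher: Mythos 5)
Your proof is correct and is essentially the paper's argument: both reveal $A$, $B$, $C$ sequentially and exploit independence together with the worst-case (minimum) fiber sizes over $\pi_X(T)$ and $\pi_{XY}(T)$. The only difference is packaging --- you take a union bound over three explicit failure stages with selection rules $x^*$, $y^*$, whereas the paper lower-bounds the success probability by the product $\bigl(1-(1-\alpha)^{\cdots}\bigr)\bigl(1-(1-\beta)^{\cdots}\bigr)\bigl(1-(1-\gamma)^{\cdots}\bigr)$ and then applies $(1-u)(1-v)(1-w)\ge 1-u-v-w$.
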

\begin{proof}
Let $p=\probability\left[T\cap(A\times B\times C)\neq\emptyset\right]$.
\begin{equation}
\begin{split}
p
 & = \sum_{\substack{S\subseteq X  \\  \pi_X(T)\cap S\neq\emptyset}}\probability\left[A=S\right]\probability\left[T\cap(S\times B\times C)\neq\emptyset\right]  \\
 & \ge \sum_{\substack{S\subseteq X  \\  \pi_X(T)\cap S\neq\emptyset}}\probability\left[A=S\right]
       \min_{x\in\pi_X(T)}\probability\left[T\cap(\{x\}\times B\times C)\neq\emptyset\right]  \\
 & = \probability\left[\pi_X(T)\cap A\neq\emptyset\right]
     \min_{x\in\pi_X(T)}\sum_{\substack{S\subseteq Y  \\  \pi_Y(T\cap(\{x\}\times Y\times Z))  \\  \cap S\neq\emptyset}}\probability\left[B=S\right]\probability\left[T\cap(\{x\}\times S\times C)\neq\emptyset\right]  \\
 & \ge \probability\left[\pi_X(T)\cap A\neq\emptyset\right]
       \min_{x\in\pi_X(T)}\sum_{\substack{S\subseteq Y  \\  \pi_Y(T\cap(\{x\}\times Y\times Z))  \\  \cap S\neq\emptyset}}\probability\left[B=S\right]  \\
   &\qquad\min_{y\in\pi_Y(T\cap(\{x\}\times Y\times Z))}\probability\left[T\cap(\{x\}\times \{y\}\times C)\neq\emptyset\right]  \\
 & \ge \probability\left[\pi_X(T)\cap A\neq\emptyset\right]
       \min_{x\in\pi_X(T)}\probability\left[\pi_Y(T\cap(\{x\}\times Y\times Z))\cap B\neq\emptyset\right]  \\
   &\qquad\min_{(x,y)\in\pi_{XY}(T)}\probability\left[\pi_Z(T\cap(\{x\}\times \{y\}\times Z))\cap C\neq\emptyset\right]  \\
 & = \left(1-(1-\alpha)^{\lvert\pi_X(T)\rvert}\right)
     \left(1-(1-\beta)^{\min_{x\in\pi_X(T)}\lvert\pi_Y(T\cap(\{x\}\times Y\times Z))\rvert}\right)  \\
   & \qquad\left(1-(1-\gamma)^{\min_{(x,y)\in\pi_{XY}(T)}\lvert\pi_Z(T\cap(\{x\}\times\{y\}\times Z))\rvert}\right)  \\
 & \ge 1-(1-\alpha)^{\lvert\pi_X(T)\rvert}-(1-\beta)^{\min_{x\in\pi_X(T)}\lvert\pi_Y(T\cap(\{x\}\times Y\times Z))\rvert}  \\
       &\qquad-(1-\gamma)^{\min_{(x,y)\in\pi_{XY}(T)}\lvert\pi_Z(T\cap(\{x\}\times\{y\}\times Z))\rvert}
\end{split}
\end{equation}
\end{proof}

\begin{lemma}\label{lem:subsetprojection}
Let $X$ and $Y$ be finite sets, $S \subseteq X\times Y$, and let $\pi_X:X\times Y\to X$ denote the projection onto the first factor. Then
\begin{equation}
\frac{\lvert S\rvert}{\lvert X\times Y\rvert}\le\frac{\lvert\pi_X(S)\rvert}{\lvert X\rvert}.
\end{equation}
\end{lemma}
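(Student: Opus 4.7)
The plan is to slice $S$ by its first coordinate and use the trivial bound that each slice has at most $\lvert Y\rvert$ elements, while only the slices over $\pi_X(S)$ can be nonempty. Concretely, I would introduce, for each $x\in X$, the fiber $S_x=\{y\in Y\mid (x,y)\in S\}$, and observe that $S_x=\emptyset$ whenever $x\notin\pi_X(S)$.

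Then I would write
\begin{equation}
\lvert S\rvert=\sum_{x\in X}\lvert S_x\rvert=\sum_{x\in\pi_X(S)}\lvert S_x\rvert\le\sum_{x\in\pi_X(S)}\lvert Y\rvert=\lvert\pi_X(S)\rvert\cdot\lvert Y\rvert,
\end{equation}
and divide both sides by $\lvert X\times Y\rvert=\lvert X\rvert\cdot\lvert Y\rvert$ to conclude the claim.

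There is really no obstacle: the statement is a one-line counting inequality, essentially the pigeonhole principle expressed in terms of densities, and the proof is a single application of the bound $\lvert S_x\rvert\le\lvert Y\rvert$. The only thing worth noting is that one can alternatively phrase the argument probabilistically: if $x\in X$ is uniform, then $\Pr[x\in\pi_X(S)]\ge\Pr[(x,y)\in S]$ where $y\in Y$ is uniform and independent, but the direct counting version above is cleaner and suffices for the downstream use in the projection estimates of \cref{lem:singleshotproductmap}.
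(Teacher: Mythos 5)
Your proof is correct and is essentially the paper's argument: the paper packages the same count as the single inclusion $S\subseteq\pi_X^{-1}(\pi_X(S))$, whose cardinality is $\lvert Y\rvert\,\lvert\pi_X(S)\rvert$, which is exactly your fiber-by-fiber sum. No differences worth noting.
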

\begin{proof}
Since $S\subseteq\pi_X^{-1}(\pi_X(S))$, we have $\lvert S\rvert\le\lvert\pi_X^{-1}(\pi_X(S))\rvert=\lvert Y\rvert\lvert\pi_X(S)\rvert$.
\end{proof}

\begin{proposition}\label{prop:capacityregion}
Let $\Lambda\subseteq I\times J\times K$. The capacity region for $\Lambda$ is the set of all triples $(a,b,c)$ satisfying
\begin{subequations}
\begin{align}
a & \le \entropy(I)_P  \label{eq:entropyconditionI}  \\
b & \le \entropy(J)_P  \label{eq:entropyconditionJ}  \\
c & \le \entropy(K)_P  \label{eq:entropyconditionK}  \\
a+b & \le \entropy(IJ)_P  \label{eq:entropyconditionIJ}  \\
a+c & \le \entropy(IK)_P  \label{eq:entropyconditionIK}  \\
b+c & \le \entropy(JK)_P  \label{eq:entropyconditionJK}  \\
a+b+c & \le \entropy(IJK)_P,  \label{eq:entropyconditionIJK}
\end{align}
\end{subequations}
for some probability distribution $P\in\distributions(\Lambda)$ (that may depend on $(a,b,c)$).
\end{proposition}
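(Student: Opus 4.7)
The plan is to prove both inclusions between $C(\Lambda)$ and the entropy polytope via separate converse and achievability arguments. The converse extracts a distribution $P \in \distributions(\Lambda)$ from the limiting empirical behavior of optimal maps, and the achievability uses a random construction based on type classes combined with \cref{lem:singleshotproductmap}, followed by time-sharing and monotonicity to fill out the full polymatroid.

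For the converse, suppose $f_n, g_n, h_n$ realize $(a, b, c)$. For each $(i', j', k')$ in the output box $[\lfloor 2^{an}\rfloor] \times [\lfloor 2^{bn}\rfloor] \times [\lfloor 2^{cn}\rfloor]$, pick a single preimage in $\Lambda^n$, yielding a set $\Lambda_n' \subseteq \Lambda^n$ whose cardinality equals the box size. Let $(X, Y, Z)$ be uniform on $\Lambda_n'$; since $(f_n \times g_n \times h_n)|_{\Lambda_n'}$ is a bijection onto the box, $(f_n(X), g_n(Y), h_n(Z))$ is uniform on the box, giving the data-processing estimates $\entropy(X) \ge \log\lfloor 2^{an}\rfloor \ge an - 1$, $\entropy(X, Y) \ge \log(\lfloor 2^{an}\rfloor\lfloor 2^{bn}\rfloor) \ge (a+b)n - 2$, and the analogous lower bounds for the remaining subsets of $\{X, Y, Z\}$. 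Set $P_n = \frac{1}{n}\sum_{t=1}^n P_t \in \distributions(\Lambda)$, where $P_t$ is the one-position marginal of $(X_t, Y_t, Z_t)$. Subadditivity of joint entropy together with concavity of the entropy function gives $\entropy(X) \le \sum_t \entropy(X_t) \le n\entropy(I)_{P_n}$, and analogously $\entropy(X, Y) \le n\entropy(IJ)_{P_n}$ and $\entropy(X, Y, Z) \le n\entropy(P_n)$. Combining these yields $a \le \entropy(I)_{P_n} + O(1/n)$ and its analogues for every other entropy. By compactness of $\distributions(\Lambda)$ and continuity of entropy, a subsequential limit $P_n \to P \in \distributions(\Lambda)$ satisfies all seven inequalities.

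For achievability, fix $P \in \distributions(\Lambda)$ and a triple strictly satisfying $a < \entropy(I)_P$, $b < \entropy(J|I)_P$, $c < \entropy(K|IJ)_P$. Approximate $P$ by an $n$-type $P_n \in \distributions[n](\Lambda)$ preserving these strict gaps for all large $n$, and put $T = \typeclass{n}{P_n}$. By the $S_n$-equivariance of type classes described in \cref{sec:preliminaries}, $|\pi_{I^n}(T)| = |\typeclass{n}{P_{n, I}}| \ge 2^{n\entropy(I)_{P_n} - o(n)}$, and for every $x \in \pi_{I^n}(T)$ the projected fiber $\pi_{J^n}(T \cap (\{x\} \times J^n \times K^n))$ has the common size $\ge 2^{n\entropy(J|I)_{P_n} - o(n)}$, with an analogous uniform lower bound $\ge 2^{n\entropy(K|IJ)_{P_n} - o(n)}$ for the third projection. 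Choose $f_n, g_n, h_n$ by assigning each value uniformly and independently at random. For each box element $(i', j', k')$, \cref{lem:singleshotproductmap} applied with $A = f_n^{-1}(i')$, $B = g_n^{-1}(j')$, $C = h_n^{-1}(k')$ and inclusion probabilities $(1/\lfloor 2^{an}\rfloor, 1/\lfloor 2^{bn}\rfloor, 1/\lfloor 2^{cn}\rfloor)$ bounds the non-coverage probability by three terms of the form $\exp(-2^{n\delta})$ with $\delta > 0$ equal to the relevant strict gap. A union bound over the $\le 2^{(a + b + c)n + O(1)}$ box elements keeps the failure probability doubly exponentially small, so deterministic maps realizing this corner triple exist.

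Permuting the roles of $I, J, K$ produces all six chain-rule corners in the $S_3$-orbit of $(\entropy(I)_P, \entropy(J|I)_P, \entropy(K|IJ)_P)$ inside $C(\Lambda)$. Their convex hull lies in $C(\Lambda)$ by the time-sharing argument already noted in the text, and composing any realizing $f_n$ with a surjection onto a smaller set shows that $C(\Lambda)$ is downward-closed in $\nonnegativereals^3$. The downward closure of the convex hull of the six chain-rule vertices coincides with the polytope defined by the seven inequalities for $P$, because these are precisely the defining inequalities of the polymatroid whose submodular rank function sends $S \subseteq \{I, J, K\}$ to $\entropy(S)_P$. Taking the union over $P \in \distributions(\Lambda)$ and closing completes the argument. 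The main obstacle is the careful bookkeeping of approximation errors, both in the $n$-type approximation of $P$ and in converting strict achievability at interior points to the closed polytope in the limit; the polymatroid identification itself is standard.
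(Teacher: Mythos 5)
Your proof is correct, and while the achievability half follows the paper's route almost exactly (random maps, type classes, \cref{lem:singleshotproductmap}, union bound, six chain-rule corners, time-sharing plus downward closure), your converse is genuinely different. The paper's converse is combinatorial: it decomposes $\Lambda^n$ into type classes, selects the type class $P_n$ whose image under $f_n\times g_n\times h_n$ is largest (losing only a polynomial factor $(n+1)^{-\lvert\Lambda\rvert}$ since there are few type classes), and bounds the box and its projections by $\lvert\typeclass{n}{P_n}\rvert$ and its marginal type classes via \cref{lem:subsetprojection}. You instead run a standard information-theoretic single-letterization: choose a transversal $\Lambda'_n$ of the box, put the uniform distribution on it, use the data-processing inequality to lower-bound the entropies of $X$, $(X,Y)$, etc.\ by the logarithms of the box dimensions, and upper-bound them by $n$ times the entropies of the averaged one-position marginal $P_n=\frac1n\sum_t P_t$ via subadditivity and concavity of entropy; compactness then yields the limiting $P$. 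Both converses are sound and of comparable length; yours avoids the method of types entirely in that direction and produces a distribution supported on $\Lambda$ for free (each $(X_t,Y_t,Z_t)\in\Lambda$), while the paper's stays within the type-counting toolkit already set up for achievability. For the final polytope identification you invoke the Edmonds greedy/polymatroid characterization (the seven inequalities define the independent-set polytope of the submodular function $S\mapsto\entropy(S)_P$, whose extreme points under a nonnegative linear objective are the chain-rule corners), whereas the paper verifies the same fact by exhibiting explicit dual-feasible solutions for each ordering of $(t_a,t_b,t_c)$; these are the same argument in different clothing. The only minor point to tidy is the boundary case where some conditional entropy vanishes (e.g.\ $\entropy(J|I)_P=0$), where your ``strictly satisfying'' corner does not exist; the paper handles this with $b=\max\{\entropy(J|I)_P-\delta,0\}$, and in your setup $b=0$ makes the corresponding condition on $g_n$ vacuous, so the fix is immediate.
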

\begin{proof}
Let $P\in\distributions(\Lambda)$, $\delta>0$, $a=\max\{\entropy(I)_P-\delta,0\}$, $b=\max\{\entropy(J|I)_P-\delta,0\}$, and $c=\max\{\entropy(K|IJ)_P-\delta,0\}$. Choose $n$-types $P_n\in\distributions[n](\Lambda)$ such that $\support(P_n)=\support(P)$ and $\lim_{n\to\infty}P_n=P$.

Draw the maps $f_n:I^n\to[\lfloor 2^{an}\rfloor]$, $g_n:J^n\to [\lfloor 2^{bn}\rfloor]$, and $h_n:K^n\to [\lfloor 2^{cn}\rfloor]$ at random, independently and uniformly distributed on the set of all maps. For $(i',j',k')\in[\lfloor 2^{an}\rfloor]\times[\lfloor 2^{bn}\rfloor]\times[\lfloor 2^{cn}\rfloor]$, we have
\begin{equation}
\begin{split}
\probability\left[(i',j',k')\notin(f_n\times g_n\times h_n)(\Lambda^n)\right]
 & \le \probability\left[(i',j',k')\notin(f_n\times g_n\times h_n)(\typeclass{n}{P_n})\right]  \\
 & = \probability\left[\typeclass{n}{P_n}\cap(f_n^{-1}(i')\times g_n^{-1}(j')\times h_n^{-1}(k'))=\emptyset\right]
\end{split}
\end{equation}
By the choice of the random maps $f_n$, $g_n$, and $h_n$, the subsets $f_n^{-1}(i')\subseteq I^n$, $g_n^{-1}(j')\subseteq J^n$, and $h_n^{-1}(k')\subseteq K^n$ are distributed as in \cref{lem:singleshotproductmap}, with $\alpha=\lfloor 2^{an}\rfloor^{-1}$, $\beta=\lfloor 2^{bn}\rfloor^{-1}$, and $\gamma=\lfloor 2^{cn}\rfloor^{-1}$. Since $S_n$ acts transitively on $\typeclass{n}{P_n}$ and the projection maps to $I^n$ and $I^n\times J^n$ are equivariant, the cardinality of each fiber is the same, and equal to the cardinality of $\typeclass{n}{P_n}$ divided by the cardinality of the image, which is itself a type class with type equal to the corresponding marginal. From \cref{lem:singleshotproductmap} we conclude that the probability $p$ that $(i',j',k')$ is not in the image of $\Lambda^n$ is at most
\begin{equation}
\begin{split}
p
 & \le (1-2^{-an})^{\left\lvert\typeclass{n}{(P_n)_I}\right\rvert}
 +(1-2^{-bn})^{\frac{\left\lvert\typeclass{n}{(P_n)_{IJ}}\right\rvert}{\left\lvert\typeclass{n}{(P_n)_I}\right\rvert}}
 +(1-2^{-cn})^{\frac{\left\lvert\typeclass{n}{(P_n)_{IJK}}\right\rvert}{\left\lvert\typeclass{n}{(P_n)_{IJ}}\right\rvert}}  \\
 & \le e^{-2^{-an}2^{n\entropy(I)_{P_n}}(n+1)^{-\lvert I\rvert}}
     + e^{-2^{-bn}2^{n\entropy(J|I)_{P_n}}(n+1)^{-\lvert I\rvert\lvert J\rvert}}  \\
    &\qquad+ e^{-2^{-cn}2^{n\entropy(K|IJ)_{P_n}}(n+1)^{-\lvert I\rvert\lvert J\rvert\lvert K\rvert}},
\end{split}
\end{equation}
which goes to zero like $e^{-2^{\Omega(n)}}$. The probability that there is at least one triple $(i',j',k')\in[\lfloor 2^{an}\rfloor]\times[\lfloor 2^{bn}\rfloor]\times[\lfloor 2^{cn}\rfloor]$ not in the image $(f_n\times g_n\times h_n)(\Lambda^n)$ is therefore at most
\begin{equation}
2^{(a+b+c)n}e^{-2^{\Omega(n)}}.
\end{equation}
It follows that for all sufficiently large $n$, there exists a choice of maps $f_n$, $g_n$, and $h_n$ such that $(f_n\times g_n\times h_n)(\Lambda^n)=[\lfloor 2^{an}\rfloor]\times[\lfloor 2^{bn}\rfloor]\times[\lfloor 2^{cn}\rfloor]$, i.e., $(a,b,c)$ is achievable. Letting $\delta\to 0$, we conclude that the rate triple $(\entropy(I)_P,\entropy(J|I)_P,\entropy(K|IJ)_P)$ is achievable.

The reasoning above distinguishes an ordering of the three factors, which could have been arbitrary. By considering all possible permutations, can see that the following rate triples are achievable:
\begin{subequations}
\begin{gather}
(\entropy(I)_P,\entropy(J|I)_P,\entropy(K|IJ)_P)  \label{eq:vertexIJK}  \\
(\entropy(I)_P,\entropy(J|IK)_P,\entropy(K|I)_P)  \label{eq:vertexIKJ}  \\
(\entropy(I|J)_P,\entropy(J)_P,\entropy(K|IJ)_P)  \label{eq:vertexJIK}  \\
(\entropy(I|JK)_P,\entropy(J)_P,\entropy(K|J)_P)  \label{eq:vertexJKI}  \\
(\entropy(I|K)_P,\entropy(J|IK)_P,\entropy(K)_P)  \label{eq:vertexKIJ}  \\
(\entropy(I|JK)_P,\entropy(J|K)_P,\entropy(K)_P). \label{eq:vertexKJI}
\end{gather}
\end{subequations}
By the observation above, the convex hull of such points is achievable, as well as any point that can be obtained by decreasing the coordinates (as long as they stay nonnegative). We claim that the convex body obtained in this way is the same as the one determined by \labelcref{eq:entropyconditionI,eq:entropyconditionJ,eq:entropyconditionK,eq:entropyconditionIJ,eq:entropyconditionIK,eq:entropyconditionJK,eq:entropyconditionIJK}.

To show this, let us consider the problem of maximizing the linear functional $t_aa+t_bb+t_cc$ over the solution set of \labelcref{eq:entropyconditionI,eq:entropyconditionJ,eq:entropyconditionK,eq:entropyconditionIJ,eq:entropyconditionIK,eq:entropyconditionJK,eq:entropyconditionIJK} (with $a,b,c\ge 0$ understood), where $t_a,t_b,t_c\in\nonnegativereals$ are parameters. The dual linear program is to minimize
\begin{equation}
\entropy(I)_Pu_I+\entropy(J)_Pu_J+\entropy(K)_Pu_K+\entropy(IJ)_Pu_{IJ}+\entropy(IK)_Pu_{IK}+\entropy(JK)_Pu_{JK}+\entropy(IJK)_Pu_{IJK}
\end{equation}
over the nonnegative variables $u_I,u_J,\dots,u_{IJK}$, subject to
\begin{equation}
\begin{bmatrix}
1 & 0 & 0  &  0 & 1 & 1  &  1  \\
0 & 1 & 0  &  1 & 0 & 1  &  1  \\
0 & 0 & 1  &  1 & 1 & 0  &  1
\end{bmatrix}
\begin{bmatrix}
u_I  \\
u_J  \\
u_K  \\
u_{JK}  \\
u_{IK}  \\
u_{IJ}  \\
u_{IJK}
\end{bmatrix}\ge
\begin{bmatrix}
t_a  \\
t_b  \\
t_c
\end{bmatrix}.
\end{equation}
We can verify that the maximum in the primal problem is always attained at one of the points \labelcref{eq:vertexIJK,eq:vertexIKJ,eq:vertexJIK,eq:vertexJKI,eq:vertexKIJ,eq:vertexKJI}, depending on the ordering of $t_a,t_b,t_c$, by finding a feasible solution to the dual program where the objective function takes the same value. For instance, if $t_a\ge t_b\ge t_c$, then at the point $(\entropy(I)_P,\entropy(J|I)_P,\entropy(K|IJ)_P)$ the objective function evaluates to 
\begin{equation}
t_a\entropy(I)_P+t_b\entropy(J|I)_P+t_c\entropy(K|IJ)_P
 = (t_a-t_b)\entropy(I)_P+(t_b-t_c)\entropy(IJ)_P+t_c\entropy(IJK)_P,
\end{equation}
which is attained in the dual program at $u_I=t_a-t_c$, $u_{IJ}=t_b-t_c$, $u_{IJK}=t_c$, $u_J=u_K=u_{IK}=u_{JK}=0$, while if $t_a\ge t_c\ge t_b$, then maximum is attained at the vertex \eqref{eq:vertexIKJ} with dual optimal solution $u_I=t_a-t_c$, $u_{IK}=t_c-t_b$, $u_{IJK}=b$, $u_J=u_K=u_{IJ}=u_{JK}=0$, etc.

For the necessity of the conditions, suppose that $f_n:I^n\to [\lfloor 2^{an}\rfloor]$, $g_n:J^n\to [\lfloor 2^{bn}\rfloor]$, and $h_n:K^n\to [\lfloor 2^{cn}\rfloor]$ are sequences of maps such that $(f_n\times g_n\times h_n)(\Lambda^n)=[\lfloor 2^{an}\rfloor]\times[\lfloor 2^{bn}\rfloor]\times[\lfloor 2^{cn}\rfloor]$ for all $n$. Then
\begin{equation}
\begin{split}
[\lfloor 2^{an}\rfloor]\times[\lfloor 2^{bn}\rfloor]\times[\lfloor 2^{cn}\rfloor]
 & = (f_n\times g_n\times h_n)(\Lambda^n)  \\
 & = (f_n\times g_n\times h_n)\left(\bigcup_{P\in\distributions[n](\Lambda)}\typeclass{n}{P}\right)  \\
 & = \bigcup_{P\in\distributions[n](\Lambda)}(f_n\times g_n\times h_n)\left(\typeclass{n}{P}\right).
\end{split}
\end{equation}
For each $n$, let $P_n$ be a type class such that $\lvert(f_n\times g_n\times h_n)(\typeclass{n}{P_n})\rvert$ is maximal. Since the number of type classes is at most $(n+1)^{\lvert\Lambda\rvert}$, $\lvert(f_n\times g_n\times h_n)(\typeclass{n}{P_n})\rvert\ge(n+1)^{-\lvert\Lambda\rvert}\lfloor 2^{an}\rfloor\lfloor 2^{bn}\rfloor\lfloor 2^{cn}\rfloor$. On the other hand, $\lvert(f_n\times g_n\times h_n)(\typeclass{n}{P_n})\rvert\le\lvert\typeclass{n}{P_n}\rvert$, therefore
\begin{equation}
\frac{1}{n}\log\left[(n+1)^{-\lvert\Lambda\rvert}\lfloor 2^{an}\rfloor\lfloor 2^{bn}\rfloor\lfloor 2^{cn}\rfloor\right]\le\frac{1}{n}\log\lvert\typeclass{n}{P_n}\rvert.
\end{equation}
By considering the projections and using \cref{lem:subsetprojection}, we similarly obtain e.g., 
\begin{equation}
\frac{1}{n}\log\left[(n+1)^{-\lvert\Lambda\rvert}\lfloor 2^{an}\rfloor\lfloor 2^{bn}\rfloor\right]\le\frac{1}{n}\log\lvert\typeclass{n}{(P_n)_{IJ}}\rvert
\end{equation}
and
\begin{equation}
\frac{1}{n}\log\left[(n+1)^{-\lvert\Lambda\rvert}\lfloor 2^{an}\rfloor\right]\le\frac{1}{n}\log\lvert\typeclass{n}{(P_n)_I}\rvert.
\end{equation}
By compactness, we can choose a subsequence of distributions $P_{n_k}$ converging to some $P\in\distributions(\Lambda)$. Taking the limits of the inequalities along this subsequence gives the inequalities \labelcref{eq:entropyconditionI,eq:entropyconditionJ,eq:entropyconditionK,eq:entropyconditionIJ,eq:entropyconditionIK,eq:entropyconditionJK,eq:entropyconditionIJK}.
\end{proof}
\begin{remark}
\Cref{prop:capacityregion} and its proof readily generalizes to any number of factors. In detail, if $\Lambda\subseteq I_1\times\dots\times I_k$, then the capacity region is the union of the sets $\setbuild{(a_1,\dots,a_k)\in\nonnegativereals^k}{\forall J\subseteq[k]:\sum_{j\in J}a_j\le\entropy(J)_P}$ over all probability distributions $P\in\distributions(\Lambda)$.

For instance, when $k=2$ and $\Lambda=\{(0,0),(0,1),(1,0)\}$, then the capacity region is the convex hull of the graph of $h(p)=-p\log p-(1-p)\log(1-p)$ and that of its inverse. The sum rate is maximal along the line segment joining $(2/3,h(2/3))$ and $(h(2/3),2/3)$, the common tangent of the two curves.
\end{remark}

\begin{theorem}\label{thm:srankexponentbound}
If $P\in\distributions(\Lambda)$ and the numbers $a,b,c\ge 0$ satisfy the inequalities \labelcref{eq:entropyconditionI,eq:entropyconditionJ,eq:entropyconditionK,eq:entropyconditionIJ,eq:entropyconditionIK,eq:entropyconditionJK,eq:entropyconditionIJK},
then $\omega_s(a,b,c)\le\log\asymptoticsupportrank(\partialmmsupport{\Lambda})\le\log\bordersupportrank(\partialmmsupport{\Lambda})\le\log\supportrank(\partialmmsupport{\Lambda})$.

Consequently,
\begin{equation}\label{eq:squaremmsrankexponentbound}
\omega_s\le 3\frac{\log\supportrank(\partialmmsupport{\Lambda})}{\log\lvert\Lambda\rvert}.
\end{equation}
\end{theorem}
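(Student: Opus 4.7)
The chain $\log\asymptoticsupportrank(\partialmmsupport{\Lambda})\le\log\bordersupportrank(\partialmmsupport{\Lambda})\le\log\supportrank(\partialmmsupport{\Lambda})$ is immediate from the definitions collected in \cref{sec:supportrank}, so the real content is the bound $\omega_s(a,b,c)\le\log\asymptoticsupportrank(\partialmmsupport{\Lambda})$ together with the derivation of \eqref{eq:squaremmsrankexponentbound} from it.

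The plan for the first inequality is to chain \cref{prop:capacityregion} with \cref{cor:partialmmproductmapsrank} and submultiplicativity of support rank. Under the stated entropy hypotheses, \cref{prop:capacityregion} places $(a,b,c)$ in the capacity region $C(\Lambda)$, and since the achievable set is monotone under coordinate-wise decrease (post-compose $f_n,g_n,h_n$ with surjections onto smaller codomains), for every $\delta>0$ the triple $(a-\delta,b-\delta,c-\delta)$ is genuinely achievable. This produces, for all large $n$, maps $f_n,g_n,h_n$ with $(f_n\times g_n\times h_n)(\Lambda^n)=[\lfloor 2^{(a-\delta)n}\rfloor]\times[\lfloor 2^{(b-\delta)n}\rfloor]\times[\lfloor 2^{(c-\delta)n}\rfloor]$. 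Applying \cref{cor:partialmmproductmapsrank} to $\Lambda^n$ with these maps yields
\begin{equation*}
\supportrank(\partialmmsupport{\Lambda^n})\ge\supportrank(\support(\mmtensor{\lfloor 2^{(a-\delta)n}\rfloor}{\lfloor 2^{(b-\delta)n}\rfloor}{\lfloor 2^{(c-\delta)n}\rfloor})).
\end{equation*}
Because $\partialmmsupport{\Lambda^n}=\partialmmsupport{\Lambda}^{\times n}$ (noted in \cref{sec:partialmm}) and support rank is submultiplicative, Fekete's lemma gives $\supportrank(\partialmmsupport{\Lambda^n})^{1/n}\to\asymptoticsupportrank(\partialmmsupport{\Lambda})$. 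Taking $\tfrac{1}{n}\log$ and letting $n\to\infty$ therefore yields $\omega_s(a-\delta,b-\delta,c-\delta)\le\log\asymptoticsupportrank(\partialmmsupport{\Lambda})$. Sending $\delta\to 0$ and using that $\omega_s(\cdot,\cdot,\cdot)$ is jointly convex and degree-$1$ homogeneous, hence continuous on the interior of $\nonnegativereals^3$, completes the first inequality.

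For \eqref{eq:squaremmsrankexponentbound} the plan is to specialize to $P$ uniform on $\Lambda$ and $a=\entropy(I)_P$, $b=\entropy(J|I)_P$, $c=\entropy(K|IJ)_P$. The chain rule for entropy gives $a+b+c=\entropy(IJK)_P=\log\lvert\Lambda\rvert$, and the seven inequalities \labelcref{eq:entropyconditionI,eq:entropyconditionJ,eq:entropyconditionK,eq:entropyconditionIJ,eq:entropyconditionIK,eq:entropyconditionJK,eq:entropyconditionIJK} then reduce to standard Shannon inequalities (non-negativity of conditional entropy and ``conditioning reduces entropy''). Combining the first inequality of the theorem with the symmetrization bound $\omega_s=\omega_s(1,1,1)\le\frac{3}{a+b+c}\omega_s(a,b,c)$ recorded in \cref{sec:supportrank} yields \eqref{eq:squaremmsrankexponentbound}. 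Notably, for \eqref{eq:squaremmsrankexponentbound} alone one may bypass continuity altogether by applying symmetrization directly at $(a-\delta,b-\delta,c-\delta)$ and sending $\delta\to 0$ in the resulting scalar inequality $\omega_s\le\frac{3\log\supportrank(\partialmmsupport{\Lambda})}{(a+b+c)-3\delta}$.

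The main obstacle I expect is this $\delta\to 0$ step for the general inequality $\omega_s(a,b,c)\le\log\asymptoticsupportrank(\partialmmsupport{\Lambda})$ at boundary points where some of $a,b,c$ vanish, which requires care with convex-analytic continuity or a direct argument. Beyond that, the proof is a routine assembly of tools already developed: \cref{prop:capacityregion} produces the maps, \cref{cor:partialmmproductmapsrank} converts them into support-rank inequalities, and submultiplicativity/Fekete promote the bound to the asymptotic quantity.
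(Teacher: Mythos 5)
Your proposal follows essentially the same route as the paper: combine \cref{prop:capacityregion} with \cref{cor:partialmmproductmapsrank} and submultiplicativity of the support rank to obtain $\omega_s(a,b,c)\le\log\asymptoticsupportrank(\partialmmsupport{\Lambda})$, then symmetrize using the point $(\entropy(I)_P,\entropy(J|I)_P,\entropy(K|IJ)_P)$, the chain rule, and the uniform distribution on $\Lambda$ to get \eqref{eq:squaremmsrankexponentbound}. Your $\delta$-approximation of points in the closure of the achievable set (and the remark that symmetrization lets you bypass continuity for the final bound) is, if anything, slightly more careful than the paper's argument, which works directly with achievable triples.
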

\begin{proof}
If $(a,b,c)$ is an achievable rate triple, then by definition there exist maps $f_n:I^n\to[\lfloor 2^{an}\rfloor]$, $g_n:I^n\to[\lfloor 2^{bn}\rfloor]$, $h_n:I^n\to[\lfloor 2^{cn}\rfloor]$, such that $(f_n\times g_n\times h_n)(\Lambda^n)=[\lfloor 2^{an}\rfloor]\times[\lfloor 2^{bn}\rfloor]\times[\lfloor 2^{cn}\rfloor]$. By \cref{cor:partialmmproductmapsrank}, this implies that $\supportrank(\partialmmsupport{\Lambda^n})\ge\supportrank(\partialmmsupport{[\lfloor 2^{an}\rfloor]\times[\lfloor 2^{bn}\rfloor]\times[\lfloor 2^{cn}\rfloor]})=\supportrank(\support(\mmtensor{\lfloor 2^{an}\rfloor}{\lfloor 2^{bn}\rfloor}{\lfloor 2^{cn}\rfloor}))$, i.e.,
\begin{equation}
\begin{split}
\omega_s(a,b,c)
 & = \lim_{n\to\infty}\frac{1}{n}\log\supportrank(\support(\mmtensor{\lfloor 2^{an}\rfloor}{\lfloor 2^{bn}\rfloor}{\lfloor 2^{cn}\rfloor}))  \\
 & \le \lim_{n\to\infty}\frac{1}{n}\log\supportrank(\partialmmsupport{\Lambda^n})  \\
 & = \log\asymptoticsupportrank(\partialmmsupport{\Lambda}).
\end{split}
\end{equation}
The inequalitites between the asymptotic support rank, the border support rank, and the support rank hold generally for any support.

As noted earlier, by joint convexity and homogeneity, $\omega_s=\omega_s(1,1,1)\le\frac{3}{a+b+c}\omega_s(a,b,c)$ holds for every $a,b,c$ with $a+b+c\neq 0$. For any distribution $P\in\distributions(\Lambda)$, the point $(a,b,c)=(\entropy(I)_P,\entropy(J|I)_P,\entropy(K|IJ)_P)$ is feasible and $\entropy(I)_P+\entropy(J|I)_P+\entropy(K|IJ)_P=\entropy(IJK)_P=\entropy(P)$ by the chain rule. Using the first part,
\begin{equation}
\omega_s
 \le \frac{3}{\entropy(P)}\omega_s(a,b,c)
 \le \frac{3}{\entropy(P)}\log\supportrank(\partialmmsupport{\Lambda}).
\end{equation}
The upper bound is minimal when $P$ is the uniform distribution on $\Lambda$, which gives \eqref{eq:squaremmsrankexponentbound}.
\end{proof}

\begin{example}
Let $\Lambda=\{(1,1,2),(1,2,1),(2,1,1),(2,2,1),(2,1,2),(1,2,2)\}\subseteq[2]\times[2]\times[2]$ ($\lvert\Lambda\rvert=6$). The border decomposition
\begin{equation}
\begin{split}
\epsilon^3\partialmmtensor{\Lambda}+O(\epsilon^4)
 & = x_{1,2}y_{1,2}z_{1,2}  \\
 &\qquad -(x_{1,2}+\epsilon^2x_{2,1})(y_{1,2}+\epsilon^2y_{2,1})(z_{1,2}+\epsilon^2z_{2,1})  \\
 &\qquad +\epsilon^2x_{2,1}(y_{1,2}+\epsilon y_{1,1})(z_{1,2}+\epsilon z_{2,2})  \\
 &\qquad +\epsilon^2(x_{1,2}+\epsilon x_{2,2})y_{2,1}(z_{1,2}+\epsilon z_{1,1})  \\
 &\qquad +\epsilon^2(x_{1,2}+\epsilon x_{1,1})(y_{1,2}+\epsilon y_{2,2})z_{2,1}
\end{split}
\end{equation}
shows that $\borderrank(\partialmmtensor{\Lambda})\le 5$, leading to the same parameters as the partial matrix multiplication tensor due to Bini, Capovani, Romani, and Lotti \cite{bini1979complexity}, with $\Lambda_{\textup{BCRL}}=\{(1,1,1),(1,1,2),(1,2,1),(1,2,2),(2,1,1),(2,1,2)\}$. Accordingly, the bound on $\omega_s$ is
\begin{equation}
\omega_s\le 3\frac{\log 5}{\log 6}=2.6947\ldots,
\end{equation}
the same as the bound on $\omega$ obtained from the tensor given in \cite{bini1979complexity} using the method of \cite{schonhage1981partial}.

However, the capacity regions are different. For instance, $C(\Lambda)$ contains every permutation of $(1,1,1/2)$, but $C(\Lambda_{\textup{BCRL}})$ does not contain $(1,1,1/2)$, since the support of the marginal on the first two factors is a subset of $\{(1,1),(1,2),(2,1)\}$, therefore the entropy is at most $\log 3<1+1$.
\end{example}

As one would expect, \cref{thm:srankexponentbound} can be combined with other methods developed previously for total matrix multiplication. We now give a version of the asymptotic sum inequality based on the proof in Bl\"aser's survey \cite{blaser2013fast}.
\begin{proposition}
Consider $p$ patterns $\Lambda_1\subseteq I_1\times J_1\times K_1$, $\Lambda_2\subseteq I_2\times J_2\times K_2$, \dots, $\Lambda_p\subseteq I_p\times J_p\times K_p$, and let $Q\in\distributions([p])$. If
\begin{equation}
(a,b,c)\in\sum_{i=1}^p Q(i)C(\Lambda_i),
\end{equation}
then
\begin{equation}\label{eq:rectangularasymptoticsuminequality}
\omega_s(a,b,c)\le\log\asymptoticsupportrank(\partialmmtensor{\Lambda_1}\oplus\dots\oplus\partialmmtensor{\Lambda_p})-\entropy(Q).
\end{equation}

Consequently,
\begin{equation}\label{eq:squareasymptoticsuminequality}
\sum_{i=1}^p\lvert\Lambda_i\rvert^{\omega_s/3}\le\asymptoticsupportrank(\partialmmtensor{\Lambda_1}\oplus\dots\oplus\partialmmtensor{\Lambda_p}).
\end{equation}
\end{proposition}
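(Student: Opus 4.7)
The plan is to adapt the classical proof of Schönhage's asymptotic sum inequality from Bläser's survey~\cite{blaser2013fast} to the support-rank setting for partial matrix multiplications. Let $T = \partialmmtensor{\Lambda_1}\oplus\cdots\oplus\partialmmtensor{\Lambda_p}$, $R = \asymptoticsupportrank(\support(T))$, and fix $(a,b,c) = \sum_i Q(i)(a_i,b_i,c_i)$ with $(a_i,b_i,c_i)\in C(\Lambda_i)$. The main inequality~\eqref{eq:rectangularasymptoticsuminequality} is equivalent to $\omega_s(a,b,c) + \entropy(Q) \le \log R$, which reduces to the support-rank analog of Bläser's direct-sum additivity,
\[
\entropy(Q) + \sum_i Q(i) \log\asymptoticsupportrank(\partialmmsupport{\Lambda_i}) \le \log R,
\]
since combining it with \Cref{thm:srankexponentbound} applied to each $\Lambda_i$ (giving $\omega_s(a_i,b_i,c_i) \le \log\asymptoticsupportrank(\partialmmsupport{\Lambda_i})$ for $(a_i,b_i,c_i)\in C(\Lambda_i)$) and the joint convexity of $\omega_s$ yields the desired bound. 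The consequence~\eqref{eq:squareasymptoticsuminequality} then follows by specializing $(a_i,b_i,c_i)$ to the uniform-distribution vertex of $C(\Lambda_i)$ (where $a_i+b_i+c_i = \log|\Lambda_i|$), using $\omega_s(a,b,c)\ge\omega_s\cdot(a+b+c)/3$ from convexity and symmetry, and optimizing $Q$ by taking $Q(i)\propto|\Lambda_i|^{\omega_s/3}$, exactly as in the classical derivation.

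The core construction is the $N$-th tensor power $T^{\otimes N} = \partialmmtensor{\Lambda^N}$ with $\Lambda = \bigsqcup_i\Lambda_i$, which satisfies $\supportrank(\partialmmsupport{\Lambda^N})\le R^{N(1+o(1))}$. Restricting all three variable sets to those whose block-sequence in $[p]^N$ has type $Q$ is a valid variable restriction and produces a sub-tensor isomorphic to $\binom{N}{NQ}$ disjoint copies of $\partialmmtensor{\prod_i\Lambda_i^{NQ(i)}}$, preserving the support-rank bound. For each $i$, the capacity-region witness for $(a_i,b_i,c_i)\in C(\Lambda_i)$ supplies maps $f_{NQ(i)}^{(i)}:I_i^{NQ(i)}\to[\lfloor 2^{a_iNQ(i)}\rfloor]$ (and analogously $g$, $h$) surjecting $\Lambda_i^{NQ(i)}$ onto a rectangular target; combining these across $i$ and uniformly across the $\binom{N}{NQ}$ copies via \Cref{prop:supportrankproductmap,prop:partialmmproductmap} yields $\binom{N}{NQ}$ disjoint copies of $\support(\mmtensor{A_N}{B_N}{C_N})$ with $\log A_N\approx Na$, $\log B_N\approx Nb$, $\log C_N\approx Nc$, still with support rank bounded by $R^{N(1+o(1))}$.

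The final and most delicate step extracts the full $\entropy(Q)$ contribution from these $\binom{N}{NQ}$ disjoint copies. Viewing them as a single partial MM pattern $\Lambda''$ and applying the rectangular part of \Cref{thm:srankexponentbound} with the uniform distribution $P$ on $\Lambda''$, the vertex construction of \Cref{prop:capacityregion} yields $(\entropy(I)_P,\entropy(J|I)_P,\entropy(K|IJ)_P) = (N(a+\entropy(Q)),Nb,Nc)$ and its cyclic counterparts, so by \Cref{thm:srankexponentbound} and homogeneity $\omega_s(a+\entropy(Q),b,c)\le\log R$ and likewise for the other two orderings. Upgrading these three ``directional'' bounds to the single bound $\omega_s(a,b,c)+\entropy(Q)\le\log R$ is the main obstacle: a naive combination via convexity and monotonicity yields only $\omega_s(a,b,c)\le\log R$, losing the $\entropy(Q)$ term. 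I expect the paper to close this gap via a Schönhage-style absorption specific to support rank, exploiting that the $\binom{N}{NQ}$ copies contribute their full multiplicity $\log\binom{N}{NQ}\approx N\entropy(Q)$ rather than the weaker $(\omega_s/3)N\entropy(Q)$ that the rectangular bound alone would provide, because they correspond to genuinely independent sub-computations rather than a single expanded matrix multiplication. This absorption step --- essentially the additivity of $\log\asymptoticsupportrank$ on direct sums of partial MM patterns, mirroring Bläser's derivation of the strong form of Schönhage's inequality for total matrix multiplication --- is the main technical content. Once it is in place, the Bläser-type inequality follows by optimizing the internal distributions $P_i$, and the proposition is established.
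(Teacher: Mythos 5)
Your overall architecture is the right one and matches the paper's: pass to the $n$-th power, restrict to a type class of block-sequences to obtain $\unittensor{\binom{n}{nQ}}\otimes\bigotimes_i\partialmmtensor{\Lambda_i}^{\otimes nQ(i)}$, and use the capacity regions to map the product pattern onto a rectangular matrix multiplication. But the decisive step --- converting the multiplicity $\binom{n}{nQ}$ into the additive $-\entropy(Q)$ term --- is exactly the step you leave open (``I expect the paper to close this gap via a Sch\"onhage-style absorption''), so the argument is incomplete at its crucial point. The missing ingredient is the support-rank version of \cite[Lemma 7.7]{blaser2013fast}: if $\unittensor{s}\otimes t'$ is a restriction of $t$, then $\asymptoticsupportrank(\support(t'))\le\asymptoticsupportrank(\support(t))/s$. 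The paper applies this with $s=\binom{n}{nQ}$ to obtain
\begin{equation*}
\asymptoticsupportrank\Bigl(\partialmmsupport{\prod_i\Lambda_i^{nQ(i)}}\Bigr)\le\binom{n}{nQ}^{-1}\asymptoticsupportrank\bigl(\partialmmtensor{\Lambda_1}\oplus\dots\oplus\partialmmtensor{\Lambda_p}\bigr)^n,
\end{equation*}
and only then invokes \cref{thm:srankexponentbound} for the single product pattern $\prod_i\Lambda_i^{nQ(i)}$, whose capacity region contains $(na,nb,nc)$ by time sharing. Your paragraph-3 route --- applying \cref{thm:srankexponentbound} to the disjoint-union pattern $\Lambda''$ of all $\binom{n}{nQ}$ copies --- is, as you yourself observe, a dead end: it only yields $\omega_s(a+\entropy(Q),b,c)\le\log R$ and its permutations, and no convexity argument recovers the additive $\entropy(Q)$. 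The fix is to apply the absorption lemma \emph{before} passing to $\omega_s$, not after.

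A second problem is your paragraph-1 ``reduction'' to $\entropy(Q)+\sum_iQ(i)\log\asymptoticsupportrank(\partialmmsupport{\Lambda_i})\le\log R$: this is strictly stronger than the proposition. Optimizing over $Q$ (take $Q(i)$ proportional to $\asymptoticsupportrank(\partialmmsupport{\Lambda_i})$) it becomes full additivity of the asymptotic support rank under direct sums, which is not known; submultiplicativity only gives $\asymptoticsupportrank(\partialmmsupport{\prod_i\Lambda_i^{nQ(i)}})^{1/n}\le\prod_i\asymptoticsupportrank(\partialmmsupport{\Lambda_i})^{Q(i)}$, i.e., the inequality pointing the wrong way for that reduction. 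The correct intermediate statement retains the asymptotic support rank of the product pattern, as displayed above. Once the absorption lemma is supplied (the paper cites it rather than reproving it), the remainder of your argument --- including the derivation of \eqref{eq:squareasymptoticsuminequality} by choosing the sum-rate-maximizing points of $C(\Lambda_i)$ and $Q(i)\propto\lvert\Lambda_i\rvert^{\omega_s/3}$ --- is correct and agrees with the paper.
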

\begin{proof}
By continuity, it suffices to consider $Q$ with rational entries, i.e., $Q\in\distributions[n]([p])$ for some $n\in\positiveintegers$. Then $(\partialmmtensor{\Lambda_1}\oplus\dots\oplus\partialmmtensor{\Lambda_p})^{\otimes n}$ restricts to the tensor
\begin{equation}
\unittensor{\binom{n}{nQ}}\otimes\bigotimes_{i=1}^p\partialmmtensor{\Lambda_i}^{\otimes nQ(i)},
\end{equation}
therefore
\begin{equation}
\asymptoticsupportrank\left(\partialmmtensor{\prod_{i=1}^p\Lambda_i^{nQ(i)}}\right)
 = \asymptoticsupportrank\left(\bigotimes_{i=1}^p\partialmmtensor{\Lambda_i}^{\otimes nQ(i)}\right)
 \le \frac{\asymptoticsupportrank\left(\partialmmtensor{\Lambda_1}\oplus\dots\oplus\partialmmtensor{\Lambda_p}\right)^n}{\binom{n}{nQ}}
\end{equation}
(see \cite[Lemma 7.7]{blaser2013fast} for the inequality). Since $(na,nb,nc)\in\sum_{i=1}^p nQ(i)C(\Lambda_i)\subseteq C(\prod_{i=1}^p\Lambda_i^{nQ(i)})$,
\begin{equation}
\omega_s(a,b,c)\le\frac{1}{n}\log\asymptoticsupportrank\left(\partialmmtensor{\Lambda_1}\oplus\dots\oplus\partialmmtensor{\Lambda_p}\right)^n-\frac{1}{n}\log\binom{n}{nQ}.
\end{equation}
Finally, take the limit $n\to\infty$ (along a subsequence ensuring that $nQ$ has integer entries) to obtain \eqref{eq:rectangularasymptoticsuminequality}.

To prove \eqref{eq:squareasymptoticsuminequality}, we continue as
\begin{equation}
\begin{split}
\frac{a+b+c}{3}\omega_s
 & \le \omega_s(a,b,c)  \\
 & \le \log\asymptoticsupportrank(\partialmmtensor{\Lambda_1}\oplus\dots\oplus\partialmmtensor{\Lambda_p})-\entropy(Q),
\end{split}
\end{equation}
and use that
\begin{equation}
\max\setbuild{a+b+c}{(a,b,c)\in\sum_{i=1}^p Q(i)C(\Lambda_i)}=\sum_{i=1}^p Q(i)\log\lvert\Lambda_i\rvert
\end{equation}
to get the best bound. Rearranging,
\begin{equation}
\entropy(Q)+\sum_{i=1}^p Q(i)\log\lvert\Lambda_i\rvert^{\omega_s/3}\le\log\asymptoticsupportrank(\partialmmtensor{\Lambda_1}\oplus\dots\oplus\partialmmtensor{\Lambda_p}).
\end{equation}
We choose (optimally) $Q(i)=\frac{\lvert\Lambda_i\rvert^{\omega_s/3}}{\sum_{j=1}^p\lvert\Lambda_j\rvert^{\omega_s/3}}$, which results in \eqref{eq:squareasymptoticsuminequality}.
\end{proof}

Similarly, the laser method can be extended to block-tight tensors that have partial matrix multiplication tensors as blocks. For completeness we state the result and refer to \cite[Theorem 9.10]{blaser2013fast} or the papers \cite{coppersmith1990matrix,strassen1991degeneration} for a proof (see also \cite[Section 15.7]{burgisser1997algebraic}), which generalises in a straightforward way.
\begin{proposition}
Let $t$ have a block decomposition $\mathcal{D}$ such that
\begin{enumerate}
\item $\support_\mathcal{D}t\subseteq I\times J\times K$ is tight
\item for all $(i,j,k)\in\support_\mathcal{D}t$, $t_{i,j,k}$ is a partial matrix multiplication tensor with pattern $\Lambda_{i,j,k}$.
\end{enumerate}
Let $Q$ be an arbitrary probability distribution on $\support_\mathcal{D}t$ and
\begin{equation}
(a,b,c)\in\sum_{(i,j,k)\in\support_\mathcal{D}t} Q(i,j,k)C(\Lambda_{i,j,k}).
\end{equation}
Then
\begin{equation}
\omega_s(a,b,c)+\min\{\entropy(Q_I),\entropy(Q_J),\entropy(Q_K)\}+\entropy(Q)-\max_P\entropy(P)\le\asymptoticrank(t),
\end{equation}
where the maximum is over probability distributions $P\in\distributions(\support_\mathcal{D}t)$ such that $P_I=Q_I$, $P_J=Q_J$, and $P_K=Q_K$.

In particular,
\begin{equation}
\min\{\entropy(Q_I),\entropy(Q_J),\entropy(Q_K)\}+\entropy(Q)-\max_P\entropy(P)+\frac{\omega_s}{3}\sum_{(i,j,k)\in\support_\mathcal{D}t} Q(i,j,k)\log\lvert\Lambda_{i,j,k}\rvert\le\asymptoticrank(t),
\end{equation}
\end{proposition}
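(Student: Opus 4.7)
The plan is to follow the classical proof of the tight laser method (as in \cite[Theorem 9.10]{blaser2013fast}), substituting each matrix multiplication block by a partial matrix multiplication block and replacing the classical reduction to matrix multiplication complexity by the asymptotic sum inequality \eqref{eq:rectangularasymptoticsuminequality} combined with \cref{thm:srankexponentbound}. The substitution works because a tensor product of partial matrix multiplications is itself a partial matrix multiplication, whose pattern is the Cartesian product of the given patterns, as recorded just before \cref{prop:partialmmproductmap}.

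For a large integer $N$, first pass to $t^{\otimes N}$ with its induced block decomposition, whose outer support $(\support_\mathcal{D} t)^N$ is again tight. Choose a type $Q_N\in\distributions[N](\support_\mathcal{D} t)$ approximating $Q$; the blocks indexed by the type class $\typeclass{N}{Q_N}$ are all isomorphic to $\partialmmtensor{\Lambda_N}$ for the product pattern
$$\Lambda_N=\prod_{(i,j,k)\in\support_\mathcal{D} t}\Lambda_{i,j,k}^{N Q_N(i,j,k)}.$$
Now apply the Salem--Spencer/Strassen zeroing argument to the tight outer structure restricted to $\typeclass{N}{Q_N}$. Since zeroing acts only on the outer variables, this step proceeds verbatim as in the matrix-multiplication case, yielding a restriction of $t^{\otimes N}$ isomorphic to a direct sum of $N_0$ disjoint copies of $\partialmmtensor{\Lambda_N}$, with
$$\frac{1}{N}\log N_0 \xrightarrow{N\to\infty} \min\{\entropy(Q_I),\entropy(Q_J),\entropy(Q_K)\}+\entropy(Q)-\max_P\entropy(P).$$

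For the final reduction, observe that achievability is closed under Cartesian products (the product of achievable maps is an achievable map for the product pattern), so whenever $(a,b,c)\in\sum_{(i,j,k)}Q(i,j,k)C(\Lambda_{i,j,k})$, the scaled triple $(Na,Nb,Nc)$ lies in $C(\Lambda_N)$ up to a vanishing correction from $Q_N\to Q$. Apply \eqref{eq:rectangularasymptoticsuminequality} to the direct sum of $N_0$ identical copies of $\partialmmtensor{\Lambda_N}$, weighted uniformly on $[N_0]$, to obtain
$$\omega_s(Na,Nb,Nc)+\log N_0\le\log\asymptoticsupportrank\bigl(\partialmmtensor{\Lambda_N}^{\oplus N_0}\bigr)\le N\log\asymptoticrank(t),$$
where the last inequality uses that the direct sum is a restriction of $t^{\otimes N}$. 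By homogeneity $\omega_s(Na,Nb,Nc)=N\omega_s(a,b,c)$, dividing by $N$ and letting $N\to\infty$ yields the main inequality. The ``in particular'' part follows by choosing $(a,b,c)$ to maximize $a+b+c$ over the sum of capacity regions: this maximum is $\sum Q(i,j,k)\log\lvert\Lambda_{i,j,k}\rvert$ (attained by picking the uniform distribution on each $\Lambda_{i,j,k}$), after which the bound $\omega_s(a,b,c)\ge\tfrac{a+b+c}{3}\omega_s$ finishes the derivation.

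The main obstacle is the Salem--Spencer zeroing step itself, which is the well-known and intricate combinatorial core of the tight laser method. Its execution is essentially unchanged from the classical case, since it depends only on the tight outer support and does not interact with the inner block structure. The genuinely new content beyond the classical proof is the observation that, thanks to the multiplicativity of partial matrix multiplications under tensor products, the inner block stays within the class of partial matrix multiplication tensors, which lets \eqref{eq:rectangularasymptoticsuminequality} and \cref{thm:srankexponentbound} convert the resulting direct sum into a bound governed by $\omega_s(a,b,c)$ at the desired rate triple.
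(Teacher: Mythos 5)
Your proposal is correct and is essentially the intended argument: the paper itself gives no proof, deferring to the classical tight laser method, and your sketch spells out exactly the "straightforward generalisation" it alludes to — run the Salem--Spencer/Strassen zeroing on the tight outer support (which never touches the inner blocks), note that the surviving blocks in a type class are products of partial matrix multiplications and hence a single partial matrix multiplication with pattern $\Lambda_N$, and then feed the resulting direct sum of $N_0$ copies into \eqref{eq:rectangularasymptoticsuminequality} with the uniform distribution on $[N_0]$, using superadditivity of capacity regions under products to place $(Na,Nb,Nc)$ in $C(\Lambda_N)$. The only cosmetic discrepancy is that the right-hand side of the stated inequality should read $\log\asymptoticrank(t)$, which your derivation correctly produces.
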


\section*{Acknowledgement}

The project has been implemented with the support provided by the Ministry of Culture and Innovation of Hungary from the National Research, Development and Innovation Fund, financed under the FK~146643 funding scheme. Supported by the J\'anos Bolyai Research Scholarship of the Hungarian Academy of Sciences.

\bibliography{refs}{}

\end{document}